%%
%% Copyright 2007, 2008, 2009 Elsevier Ltd
%%
%% This file is part of the 'Elsarticle Bundle'.
%% ---------------------------------------------
%%
%% It may be distributed under the conditions of the LaTeX Project Public
%% License, either version 1.2 of this license or (at your option) any
%% later version.  The latest version of this license is in
%%    http://www.latex-project.org/lppl.txt
%% and version 1.2 or later is part of all distributions of LaTeX
%% version 1999/12/01 or later.
%%
%% The list of all files belonging to the 'Elsarticle Bundle' is
%% given in the file `manifest.txt'.
%%

%% Template article for Elsevier's document class `elsarticle'
%% with numbered style bibliographic references
%% SP 2008/03/01
%%
%%
%%
%% $Id: elsarticle-template-num.tex 4 2009-10-24 08:22:58Z rishi $
%%
%%
\documentclass[preprint,12pt,3p]{elsarticle}
%\usepackage{lineno}
%\linenumbers

%% Use the option review to obtain double line spacing
%% \documentclass[preprint,review,12pt]{elsarticle}

%% Use the options 1p,twocolumn; 3p; 3p,twocolumn; 5p; or 5p,twocolumn
%% for a journal layout:
%% \documentclass[final,1p,times]{elsarticle}
%% \documentclass[final,1p,times,twocolumn]{elsarticle}
%% \documentclass[final,3p,times]{elsarticle}
%% \documentclass[final,3p,times,twocolumn]{elsarticle}
%% \documentclass[final,5p,times]{elsarticle}
%% \documentclass[final,5p,times,twocolumn]{elsarticle}

%% if you use PostScript figures in your article
%% use the graphics package for simple commands
%% \usepackage{graphics}
%% or use the graphicx package for more complicated commands
%% \usepackage{graphicx}
%% or use the epsfig package if you prefer to use the old commands
%% \usepackage{epsfig}

%% The amssymb package provides various useful mathematical symbols
\usepackage{amssymb}
%% The amsthm package provides extended theorem environments
%% \usepackage{amsthm}

\usepackage{amsthm}
\usepackage{graphicx}
\usepackage{amsmath}
\usepackage{hyperref}
\usepackage{graphicx}

\usepackage{pgf,tikz}

\usetikzlibrary{snakes}
\tikzstyle{printersafe}=[snake=snake,segment amplitude=0 pt]

\usetikzlibrary{arrows}
\usetikzlibrary{shapes}
\usepackage{multirow}
\usepackage{latexsym}
\usepackage{setspace}
\PassOptionsToPackage{boxed,section}{algorithm}
\usepackage{algorithm}
\usepackage{algorithmic}
\usepackage{enumerate}
\usepackage{amsmath}			% special AMS math definitions
\usepackage{amssymb}			% special AMS math symbols
\usepackage{url}
\usepackage{setspace}
\usepackage{tikz}
\usetikzlibrary{arrows}
\usetikzlibrary{shapes}
\usetikzlibrary{decorations.markings}
\usepackage{geometry}
\usepackage{bbm}

\newtheorem{proposition}{\em Proposition}

\newtheorem{theorem}{\em Theorem}
\newtheorem{conjecture}{\em Conjecture}
\newtheorem{definition}{\em Definition}

\newtheorem{remark}{\em Remark}
\newtheorem{corollary}{\em Corollary}
\newtheorem{observation}{\em Observation}

%% The lineno packages adds line numbers. Start line numbering with
%% \begin{linenumbers}, end it with \end{linenumbers}. Or switch it on
%% for the whole article with \linenumbers after \end{frontmatter}.
%% \usepackage{lineno}

%% natbib.sty is loaded by default. However, natbib options can be
%% provided with \biboptions{...} command. Following options are
%% valid:

%%   round  -  round parentheses are used (default)
%%   square -  square brackets are used   [option]
%%   curly  -  curly braces are used      {option}
%%   angle  -  angle brackets are used    <option>
%%   semicolon  -  multiple citations separated by semi-colon
%%   colon  - same as semicolon, an earlier confusion
%%   comma  -  separated by comma
%%   numbers-  selects numerical citations
%%   super  -  numerical citations as superscripts
%%   sort   -  sorts multiple citations according to order in ref. list
%%   sort&compress   -  like sort, but also compresses numerical citations
%%   compress - compresses without sorting
%%
%% \biboptions{comma,round}

% \biboptions{}

\journal{Sample Journal}

\begin{document}

\begin{frontmatter}

\title{On the fixed-parameter tractability of the partial vertex cover problem with a matching constraint in edge-weighted bipartite graphs}
%\tnotetext[label0]{This is only an example}

\author[label1]{Vahan Mkrtchyan\corref{cor1}}
\address[label1]{Gran Sasso Science Institute,
School of Advanced Studies, L'Aquila, Italy}
%\address[label2]{Address Two\fnref{label4}}

\cortext[cor1]{I am corresponding author}
%\fntext[label3]{I also want to inform about\ldots}
%\fntext[label4]{Small city}

\ead{vahan.mkrtchyan@gssi.it}
%\ead[url]{author-one-homepage.com}

\author[label5]{Garik Petrosyan}
\address[label5]{Department of Informatics and Applied Mathematics,
Yerevan State University, Yerevan, Armenia}
\ead{garik.petrosyan.1@gmail.com}

%\author[label1,label5]{Author Three}
%\ead{author.three@mail.com}

\begin{abstract}
In the classical partial vertex cover problem, we are given a graph $G$ and two positive integers $R$ and $L$. The goal is to check whether there is a subset $V'$ of $V$ of size at most $R$, such that $V'$ covers at least $L$ edges of $G$. The problem is NP-hard as it includes the Vertex Cover problem. Previous research has addressed the extension of this problem where one has weight-functions defined on sets of vertices and edges of $G$. In this paper, we consider the following version of the problem where on the input we are given an edge-weighted bipartite graph $G$, and three positive integers $R$, $S$ and $T$. The goal is to check whether $G$ has a subset $V'$ of vertices of $G$ of size at most $R$, such that the edges of $G$ covered by $V'$ have weight at least $S$ and they include a matching of weight at least $T$. In the paper, we address this problem from the perspective of fixed-parameter tractability. One of our hardness results is obtained via a reduction from the bi-objective knapsack problem, which we show to be W[1]-hard with respect to one of parameters. We believe that this problem might be useful in obtaining similar results in other situations. 
\end{abstract}

\begin{keyword}
%% keywords here, in the form: keyword \sep keyword
partial vertex cover \sep bipartite graph \sep fixed-parameter tractability \sep W[1]-hardness
%% MSC codes here, in the form: \MSC code \sep code
%% or \MSC[2008] code \sep code (2000 is the default)
\end{keyword}

\end{frontmatter}

%%
%% Start line numbering here if you want
%%
% \linenumbers

%% main text

\section{Introduction}

In the present paper, we study an extension of the classical vertex cover problem (VC). We address this problem from
the perspective of parameterized complexity theory and algorithms. Our problem deals with the partial vertex cover problem (PVC). Recall that in this problem the goal is to cover a certain number of edges (not necessarily all of the edges of the input graph as in VC) using
  the minimum number of vertices. In our problem which we call matching version of PVC, we have one more constraint where we require a certain lower bound for the size of the matching in the covered edges. We also consider the weighted variants of this problem. Our main goal is to investigate this problem in bipartite graphs. In this case, we refer to it as the matching version of the partial vertex cover problem in bipartite graphs (M-PVCB). Some applications of the PVCB and related problems are given in \cite{CGBS13,pvcbpaper}.

In the matching version of edge-weighted partial vertex cover problem (M-EPVC), we are given a
graph $G=(V,E)$, a weight function
$p:E\rightarrow \mathbb{N}$, and three positive integers $R$, $S$ and $T$. The goal is to
check whether there is a subset $V'\subseteq V$ of cardinality at most $R$,
such that the total weight of edges covered by $V'$ is at least $S$, and the covered edges include a matching of weight at least $T$.

In the present paper, we study the fixed-parameter tractability of M-EPVC in
bipartite graphs (M-EPVCB). By obtaining a W[1]-hardness result for so called multi-objective knapsack problem,
we show that M-EPVCB is W[1]-hard with respect to $R$. We also obtain similar results for other interesting parameters.
In case of some parameters we are able to show that M-EPVCB is Fixed-Parameter Tractable (FPT) with respect to the parameter under consideration. 

 The paper is organized as follows: Main notations and definitions are given in Section \ref{mnd}. The formal statements of the problems studied in the paper are given in Section \ref{sop}. Section \ref{motwork} presents the related approaches in the literature. In Section \ref{mkp}, we obtain some hardness results for so-called multi-objective knapsack problem and its two restrictions. These results are used later in Section \ref{mainres} where the main results of the paper are obtained. We conclude the paper in Section \ref{conc}, by summarizing our results and presenting some open problems that we feel deserve further investigation.
 
\section{Main Notations and Definitions}
\label{mnd}
 
We consider finite, undirected graphs that do not contain
loops or parallel edges. The degree of a vertex is the
number of edges of the graph that are incident to it. The maximum
degree of the graph $G$, denoted by $\Delta(G)$, is the maximum of all degrees of vertices
of $G$. For a positive integer $k$ we let $V_k$ and $V_{\geq k}$ be the sets of vertices of $G$ that have degree $k$ and at least $k$, respectively. Let $rad(G)$ and $diam(G)$ be the radius and diameter of $G$. If $P$ is a path of length $k$ in $G$, then we will say that $P$ is a $k$-path.

If $I$ is a subset of vertices of a graph $G$, then $I$ is called an independent set if any two vertices of $I$ are not adjacent in $G$. Let $\alpha(G)$ be the cardinality of the largest independent set of $G$. A subset $M$ of edges of $G$ is called a matching, if no two edges of $M$ are incident to the same vertex of $G$. A matching $M$ of $G$ is called an induced matching, if $G$ contains no path of length three, such that its first and third edge belong to $M$. Let $\nu(G)$ be the size of a largest matching of $G$, and let $\nu_{ind}(G)$ be the size of a largest induced matching of $G$. Clearly, in any graph $G$ we have $\nu_{ind}(G)\leq \nu(G)\leq \frac{|V|}{2}$. If $w:E(G)\rightarrow \mathbb{N}$ is a weight function defined on edges of $G$ and $X\subseteq E(G)$, then let $\nu_{w}(X)$ be the maximum weight (with respect to $w$) of a matching $M$, such that $M\subseteq X$. In particular, when $X=E(G)$, instead of writing $\nu_{w}(E(G))$, we will write $\nu_{w}(G)$.

A graph $G=(V,E)$ is bipartite, if its vertex set $V$ can be
partitioned into two independent sets $V_1$ and $V_2$. Usually, $V_1$ and $V_2$ are called the bipartition of $G$. If $G$ is a bipartite graph with a bipartition $V_1$ and $V_2$, such that any vertex of $V_1$ is adjacent to any vertex of $V_2$, then $G$ is called a complete bipartite graph. We will denote such a bipartite graph as $K_{|V_1|, |V_2|}$.

Given a graph $G= ( V, E)$, and a set ${ V_0 \subseteq V}$ of vertices,
an edge $(u,v) \in { E}$ is {covered} by ${ V_0}$ if $u \in { V_0}$
or $v \in { V_0}$. Let  $E(V_0)$ be the set of edges of $G$ that are
covered by $V_0$. The classical vertex cover
problem (VC) is defined as finding the smallest set $V_0$ of vertices of
the input graph $G=(V,E)$, such that $E(V_0)=E$. We will denote the cardinality of such set $V_0$ as $\tau(G)$. The vertex cover problem is a
well-known { NP-complete} problem \cite{Kar72}.

If $\Pi$ is an algorithmic problem and $t$ is a parameter, then the pair $(\Pi, t)$ is called a {parameterized} problem. The parameterized problem $(\Pi, t)$ is {fixed-parameter tractable} (or $\Pi$ is {fixed-parameter tractable with respect to the parameter} $t$) if there is an algorithm $A$ that solves $\Pi$ exactly, whose running-time is $g(t)\cdot poly(size)$. Here $g$ is some (computable) function of $t$, $size$ is the length of the input and $poly$ is a polynomial function. Usually, such an algorithm $A$ is called an FPT algorithm for $(\Pi, t)$. Sometimes we will say that $A$ runs in FPT($t$) time.

A parameterized problem is called {para}NP-{hard}, if it remains NP-hard even when the parameter under consideration is a constant. In the classical complexity theory, there is the notion of NP-hardness that indicates that a certain problem is less likely to be polynomial time solvable. It relies on the assumption $P\neq NP$. The classical {Satisfiability} problem is an NP-hard problem and any problem such that {Satisfiability} can be reduced to it is NP-hard, too. Similarly, in parameterized complexity theory there is the notion of W[1]-{hardness}, which indicates that a certain parameterized problem is less likely to be fixed-parameter tractable. It relies on the assumption $FPT\neq W[1]$, which says that not all problems from $W[1]$ are fixed-parameter tractable. The {Maximum Clique} problem where the parameter under consideration is $k$ - the size of the clique that we are looking for, is an example of a W[1]-hard problem, and any problem such that the maximum clique with respect to $k$ can be FPT-reduced to it, is also W[1]-hard. Recall that an {FPT reduction} between two parameterized problems $(\Pi_1, t_1)$ and $(\Pi_2, t_2)$ is an algorithm $R$ that maps instances of $\Pi_1$ to those of $\Pi_2$, such that 
\begin{enumerate}
    \item [(i)] for any instance $I_1\in \Pi_1$, we have $I_1$ is a ``yes"-instance of $\Pi_1$ if and only if $R(I_1)$ is a ``yes"-instance of $\Pi_2$,
    
    \item [(ii)] there is a computable function $h$, such that for any instance $I_1\in \Pi_1$ $t_2(R(I_1))\leq h(t_1(I_1))$,
    
    \item [(iii)] there is a computable function $g$, such that $R$ runs in time $g(t_1)\cdot poly(size)$.
\end{enumerate}

The reader can learn more about this topic from \cite{ParamBook15}, that can be a good guide for algorithmic concepts that are not defined in this paper.

\section{Formal Statement of Main Problems}
\label{sop}

In this paper, we study the following variants of the VC problem:
\begin{enumerate}[(a)]
\item The partial vertex cover problem (PVC)
\begin{definition}
Given a graph ${ G=( V,E)}$, two positive integers $k_1$, and $k_2$. The goal is to check whether there is a subset $V_0$ of $V$, such that $|V_0| \le k_1$ and $|E(V_0)|\geq k_2$.
\end{definition}

\item The weighted partial vertex cover problem (WPVC)

\begin{definition}
\label{def:WPVCB} 
Given a graph ${ G=( V,E)}$,  weight-functions $c: { V} \rightarrow { N}$ and ${p}: { E} \rightarrow { N}$, two positive integers $k_1$ and $k_2$. The goal is to check whether there is a subset $V_0$ of $V$, such that $\sum_{v \in { V_0}} c(v) \le k_1$ and 
 $\sum_{e\in { E(V_0)}} p(e) \geq k_2$?
\end{definition}

\item The partial vertex cover problem on bipartite graphs (PVCB) - This is the restriction of the partial vertex cover
problem (PVC) to bipartite graphs.

\item The weighted partial vertex cover problem on bipartite graphs (WPVCB) - This is the restriction of the weighted partial vertex cover problem (WPVC)
to bipartite graphs.

\item The VPVCB problem, a special case of the WPVCB problem, where all the edge weights are set to $1$,

\item The EPVCB problem, a special case of the WPVCB problem, where all the vertex weights are set to $1$.

\item The PVCB problem, a special case of the WPVCB problem, where all the vertex and edge weights are set to $1$.

\item The partial vertex cover problem with a matching constraint (M-PVCB) - This is a variant of the PVCB problem, in which we are given
a third parameter $k_3$ and the goal is to find a vertex subset of cardinality at most $k_1$, covering at least $k_2$ edges, such that the covered edges include a matching of size at least $k_3$.

\item The edge-weighted partial vertex cover problem with a matching constraint (M-EPVCB) - This is a variant of the EPVCB problem, in which we are given
a third parameter $k_3$ and the goal is to find a vertex subset of cardinality at most $k_1$, such that the covered edges have weight at least $k_2$ and they include a matching of weight at least $k_3$.
\end{enumerate}

The main contributions of this paper are the following:
\begin{enumerate}[a.]
\item { W[1]-hardness} of the bi-objective knapsack problem and its two restrictions with respect to the budget $B$.

\item Reduction of the parameterized problem (M-EPVCB, $k_1$) to instances in which $k_1<k_3<k_2< k_3\cdot \Delta(G)$.

\item { W[1]-hardness} of the M-EPVCB problem with respect to $k_1$.

\item NP-hardness of M-EPVCB in complete bipartite graphs $K_{t,t}$ and the paraNP-hardness of this problem with respect to some parameters.

\item Hardness of M-EPVCB with respect to $|V|-2\nu_{ind}(G)$ under the assumption FPT$\neq$W[1].

\item Hardness of M-EPVCB in paths and cycles under the assumption FPT$\neq$W[1].

\item Fixed parameter tractability of M-EPVCB with respect to $|V_{\geq 2}|$ and some other parameters.

\end{enumerate}

\section{Related Work}
\label{motwork}

PVC represents a natural theoretical
generalization of VC. It has some practical
applications. Flow-based risk-assessment models in computational
systems can be viewed as instances of PVC \cite{CGBS13}. In particular, PVC has applications to computer security when the input is a bipartite graph \cite{pvcbpaper}.

 VC is polynomial-time solvable in bipartite graphs. However, the
computational complexity of PVC in bipartite graphs remained open until it was recently
shown to be { NP-hard} by several authors \cite{Apoll,pvcbpaper,CS14,Joret}.

VC has been intensively studied from the perspective of approximation algorithms. There are many $2$-approximation algorithms for VC (see, for example,  \cite{Vazirani}). \cite{Kar09} provides an approximation algorithm for the VC problem which has a factor $(2 - \theta(\frac{1}{\sqrt{\log n}}))$. This is the best known result for now. The VC problem is shown to be { APX-complete} in \cite{PY91}. Moreover, it cannot be
approximated within a factor of $1.3606$ under the assumption ${ P \neq NP}$ \cite{DS05}. Recently, in \cite{KMS18}, this lower bound was improved to $(\sqrt{2}-\epsilon)$ for any $\epsilon>0$. If Khot's unique
games conjecture is true, then VC cannot be approximated within any constant factor smaller than $2$ \cite{KR08}. \cite{KPS11} provides a $(\frac{4}{3}+\epsilon)$-approximation algorithm for WPVC when the input graph is bipartite. Here $\epsilon >0$ is any constant.

All hardness results for the VC problem directly apply to the PVC
problem because the PVC problem extends the VC
problem. The PVC problem and the partial-cover variants
of related graph problems have been extensively studied
\cite{Bla03,BshB98,KLR08,M09,KMR07,KMRR06}. For example, there is an
$O(n \cdot \log n +m)$-time $2$-approximation algorithm for PVC based on the
primal-dual method \cite{M09}. Moreover, there is a combinatorial
$2$-approximation algorithm \cite{BFMR10}. Both of the two algorithms are
for a more general soft-capacitated version of PVC. There are several $2$-approximations resulting from other approaches
\cite{Bar01,BshB98,Hoch98,GKS04}. Finally, note that the WPVC
problem for trees is studied in \cite{pvct}. The paper provides an FPTAS for the problem. Additionally, the paper provides a polynomial time algorithm for the case of unweighted vertices (edges may have weights).

Another problem with a tight relationship to WPVC is the so-called budgeted
maximum coverage problem (BMC). In the BMC problem one tries to find a
min-cost subset of vertices, such that the profit of covered edges is
maximized. It can be easily shown that the two problems are equivalent
from the perspective of exact solvability. The BMC problem for sets
(not necessarily graphs) admits a $(1 - \frac{1}{e})$-approximation
algorithm as shown in \cite{BMCPsets}. However, special cases that beat this bound are
rare. The pipage rounding technique gives a
$\frac{3}{4}$-approximation algorithm for the BMC problem on graphs
\cite{AgeevSvirid}. This is improved to $\frac{4}{5}$ for bipartite
graphs in \cite{Apoll2}. In \cite{pvcbpaper,CS14}, an $\frac{8}{9}$-approximation algorithm for the problem is presented
when the input graph is bipartite and the vertices are unweighted (edges may have weights). The
result is based on the natural linear-programming formulation of the problem. The constant $\frac{8}{9}$ matches the integrality gap of the linear
program used in the formulation. Recently, in \cite{paschos19}, V. Paschos presented a polynomial time approximation scheme for
the edge-weighted maximum coverage problem on bipartite graphs.

Another problem with a close relation to the BMC and PVC problems is the profit cover problem (PC). Like in the BMC and PVC problems, the PC problem does not require a solution that covers all the vertices of a graph. However, instead of minimizing the number of vertices that cover a given number of edges or maximizing the number of edges covered by a fixed number of vertices, the goal in the PC problem is to maximize the profit. It is defined as the difference between the number of covered edges and the number of vertices in the cover. The PC problem has been considered in \cite{SRHH02}. There, it is shown that there exists a $O(p \cdot n + 1.151^p)$ algorithm for the PC problem. Here $p$ is the desired profit.
   
The $2$-PVCB problem studied in \cite{iwoca20} is closely related to the constrained minimum vertex cover problem on bipartite graphs (MIN-CVCB). In the MIN-CVCB problem, we are given two parameters $k_A$ and $k_B$, and the goal is to find a cover of the bipartite graph $G=(A,B,E)$ (with a bipartition $A$ and $B$) using at most $k_A$ vertices from $A$ and at most $k_B$ vertices from $B$. The MIN-CVCB problem is {NP-complete} and can be solved by a fixed parameter tractable algorithm that runs in time $O(1.26^{k_A+k_B}+(k_A+k_B) \cdot |G|)$ as demonstrated in \cite{CK03}. The $2$-PVCB problem is a generalization of the MIN-CVCB problem in which one does not need to cover all of the edges of the input graph $G$. In \cite{iwoca20}, it is shown that this generalization makes the problem no longer fixed parameter tractable in $k_A$ and $k_B$ under the assumption { FPT$\neq$W[1]}.
   
In this paper, we address our problems from the perspective of fixed-parameter tractability. From this point of view, PVC is in some sense
more difficult than VC. For example, PVC is
{ W[1]-hard} with respect to $R$ (that is, the number of vertices in the cover) \cite{ParamBook15}. On the other hand, VC is FPT
\cite{ParamBook15,Guo05}.
      
In \cite{AM11} the decision version of WPVCB is considered. There, the authors show that
this problem is FPT with respect to the vertex budget $k_1$, when the vertices and edges
of the bipartite graph are unweighted. In \cite{iwoca20}, by extending the result of Amini et al. \cite{AM11}, it is shown that
the decision version of WPVCB is FPT with respect to $k_1$, if the
vertices have cost one, while the edges may have arbitrary weights. On the other hand, the problem is { W[1]-hard} for arbitrary vertex weights, even when edges have profit one \cite{iwoca20}. \cite{iwoca20} proves that for
bounded-degree graphs WPVC is FPT with respect to $k_1$. Similar conclusion holds for WPVC with respect to $k_2$. Finally, \cite{iwoca20} shows that M-PVCB is FPT with respect to the budget $k_1$. Terms and concepts that we do not define in the paper can be found in \cite{ParamBook15}.

\section{The multi-objective knapsack problem}
\label{mkp}

In this section, we consider the multi-objective version of the classical knapsack problem. We present some hardness results for this version.

Recall that in the ordinary version of the problem, on the input we are given $n$ items $A=\{a_1,...,a_n\}$, a cost function $c:A\rightarrow N$, a profit function $p:A\rightarrow N$ and two constants $S$ and $T$. The goal is to check whether there is a subset $A_0$ of $A$, such that $c(A_0)\leq S$ and $p(A_0)\geq T$. This problem is NP-complete. It is natural to consider the following extension of this problem where on the input we have many cost functions and many profit functions. The goal in this new version is to check whether there is a subset of items whose cost with respect to any of the cost functions is at most some given bound, and its profit with respect to any of the profit functions is at least some other bound. This version of the problem is called the multi-objective knapsack problem.

In this paper, we will need only the case of this problem where on the input we have one cost function that is identically one and two profit functions. Let us formulate this version precisely:\\

{\bf Problem:} We are given a set of items $A=\{a_1,...,a_n\}$, a constant $B$ (that we will call a budget), two constants $P_1$ and $P_2$, and two profit functions $pr_1:A\rightarrow N$ and $pr_2:A\rightarrow N$. The goal is to check whether there is a subset $S\subseteq A$ with $|S|\leq B$, such that $pr_1(S)\geq P_1$ and $pr_2(S)\geq P_2$.\\

In this paper, we call this problem bi-objective knapsack problem or BKP for short. Below we are going to obtain some hardness results for BKP and its restrictions.

In the ``compendium of parameterized problems" (page 92 of \cite{MC06}), SubSet Sum problem is defined, which is the following: we are given a set of integers $X=\{x_1,...,x_n\}$, an integer $s$ and a positive integer $k$. The goal is to check whether $X$ has a subset $X'$ of cardinality $k$ such that the sum of numbers in $X'$ is exactly $s$. In \cite{MC06} it is stated that this problem is W[1]-hard with respect to $k$. The reference given there is \cite{DF95} where the authors proved the W[1]-hardness of Sized SubSet Sum (see page 123 of the paper) which is the same problem except that all numbers involved are positive integers. Clearly the hardness of Sized SubSet Sum implies the hardness of SubSet Sum as the latter is just an extension of the former.

Now, let us use this in order to obtain a reduction for the multi-objective knapsack problem where the weight functions can take negative values too. Assume that we have an instance of the SubSet Sum. Consider elements $A=\{a_1,...,a_n\}$. Let $pr_1(a_j)=x_j$ and $pr_2(a_j)=-x_j$. Define $B=k$, $P_1=s$ and $P_2=-s$. Then we have
that there is $X'$, $|X'|=k$ such that $\sum_{x\in X'}x=s$ if and only if $\sum_{x\in X'}x\geq s$ and $\sum_{x\in X'}(-x)\geq -s$. The latter is true if and only if there is $T\subseteq A$ with $|T|=B$ such that $pr_1(T)\geq P_1$ and $pr_2(T)\geq P_2$.

Thus, this version of the problem is W[1]-hard with respect to $B$. This proof has two drawbacks. First, we required that $|T|=B$. This is not a problem if the weights are non-negative, however, when they can be negative, then conditions $|T|=B$ and $|T|\leq B$ are not the same. Second, the weights can take negative values which we did not assume in our initial formulation of BKP.

We can fix the above two drawbacks as follows. Let us FPT-reduce the above mentioned variant of the knapsack problem with negative weights to the case when everything is positive. If this is achieved, as a side effect, we will solve also the issue over $|T|=B$ and $|T|\leq B$.

\begin{theorem}
\label{thm:BKPW1hardnessB} BKP is W[1]-hard with respect to $B$.
\end{theorem}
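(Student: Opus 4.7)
My plan is to finish the approach already sketched in the text. The preceding discussion produces an FPT reduction from (Sized) SubSet Sum to an auxiliary variant of BKP in which profits may be negative and the cardinality requirement is $|T|=B$ instead of $|T|\le B$. What remains is an FPT ``cleaning'' reduction from this auxiliary variant to BKP as actually defined (profits in $\mathbb{N}$, $|T|\le B$), with the parameter $B$ preserved. Composing the two reductions and invoking the W[1]-hardness of Sized SubSet Sum from \cite{DF95} then yields Theorem~\ref{thm:BKPW1hardnessB}.

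For the cleaning step I would use a uniform shift. Given an auxiliary instance with items $A=\{a_1,\ldots,a_n\}$, profits $pr_1,pr_2$, thresholds $P_1,P_2$, and budget $B$, pick
\[
M \;=\; 1+\sum_{j=1}^{n}\bigl(|pr_1(a_j)|+|pr_2(a_j)|\bigr)+|P_1|+|P_2|,
\]
and define a BKP instance on the same set $A$ and budget $B$ by $pr_i'(a_j):=pr_i(a_j)+M$ and $P_i':=P_i+BM$ for $i\in\{1,2\}$. By the choice of $M$, each $pr_i'(a_j)$ is a positive integer, so the new instance is syntactically valid. For any $T\subseteq A$ one has $pr_i'(T)=pr_i(T)+|T|\cdot M$, so when $|T|=B$ the shifted constraints are equivalent to the original ones.

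The key thing to verify is that the shift automatically forces $|T|=B$ on every feasible solution of the new instance. If $|T|\le B-1$, then $pr_i'(T)\le\sum_j|pr_i(a_j)|+(B-1)M$ while $P_i'=P_i+BM$, so the desired inequality $pr_i'(T)<P_i'$ reduces to $M>\sum_j|pr_i(a_j)|-P_i$, which holds for both $i$ by the choice of $M$. Hence any ``yes''-witness $T$ with $|T|\le B$ for the new instance in fact satisfies $|T|=B$, and the two instances are equivalent. The construction runs in polynomial time, the numbers $pr_i'(a_j)$ and $P_i'$ are polynomially bounded in the input, and the parameter $B$ is unchanged, so this is an FPT reduction. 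The only delicate point I anticipate is the bookkeeping of choosing $M$ large enough to kill both threshold constraints simultaneously when $|T|<B$ while keeping the encoding polynomial; the bound above is comfortably on the right side of every requirement, which is why I prefer to over-shoot rather than optimize.
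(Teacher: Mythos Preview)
Your proof is correct and follows essentially the same route as the paper: an additive shift that pushes all profits into the positive integers while leaving the parameter $B$ unchanged. The only cosmetic differences are that the paper uses two separate shifts $Q_i=1+\sum_{x\in A}|pr_i(x)|$ (one per objective) and handles the $|T|=B$ versus $|T|\le B$ issue implicitly via monotonicity---once profits are positive, any feasible $T$ with $|T|\le B$ can be padded to size exactly $B$---whereas you take a single larger shift and argue directly that $|T|<B$ is infeasible in the shifted instance.
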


\begin{proof} Assume that $A=\{a_1,...,a_n\}$, $1\leq B\leq n$, $pr_1, pr_2: A\rightarrow Z$ and $P_1, P_2\in Z$ are given. Let us reduce this to the case when everything is positive. For any $x\in A$ and $i=1,2$ define:
\[pr'_i(x)=pr_i(x) +Q_i\]
\[P'_i=P_i+B\cdot Q_i,\]
where
\[Q_i=1+\sum_{x\in A}|pr_i(x)|.\]
Observe that $Q_i>0$ and $pr'_i(x)>0$. We can assume that $-Q_i<P_i<Q_i$, as if $P_i\geq Q_i$ then we have a trivial ``no"-instance, and if $P_i\leq -Q_i$ we have a trivial ``yes"-instance. Thus, $P'_i>0$.

Let us show that there is $T\subseteq A$, with $|T|=B$ such that $pr_i(T)\geq P_i$ if and only if $pr'_i(T)\geq P'_i$. By our definitions, we have
\[pr'_i(T)=pr_i(T)+|T|\cdot Q_i=pr_i(T)+B\cdot Q_i.\]
Hence $pr'_i(T)\geq P'_i=P_i+B\cdot Q_i$ if and only if $pr_i(T)\geq P_i$.

Thus, this is a reduction. Moreover, observe that it is a polynomial time reduction. Since the value of $B$ is unchanged, we have an FPT reduction. The proof is complete.
\end{proof}

Now, we are going to show that BKP remains W[1]-hard even if we have some additional restrictions on the profit functions. We will need these results in order to obtain some our main results in the next section.

\begin{theorem}
\label{thm:BKPrestrict1W1hardnessB} BKP remains W[1]-hard with respect to $B$ even if $pr_1(x)-pr_2(x)\leq pr_2(x)< pr_1(x)$ for any $x\in A$.
\end{theorem}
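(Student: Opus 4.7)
My plan is to give an FPT reduction from the (unrestricted) positive-valued BKP, whose W[1]-hardness with respect to $B$ was just established in Theorem~\ref{thm:BKPW1hardnessB}, to the variant in which every item additionally satisfies $pr_1(x)-pr_2(x)\leq pr_2(x)<pr_1(x)$. Since the construction behind Theorem~\ref{thm:BKPW1hardnessB} already produces instances with strictly positive integer profits and positive targets, what remains is to massage such an instance into one obeying the ratio condition, without touching the budget $B$.

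The idea is to add different constant shifts to the two profit functions so that the ratio $pr'_2(x)/pr'_1(x)$ is forced into $[1/2,1)$ uniformly in $x$, while the new constraints $pr'_i(T)\geq P'_i$ still encode exactly $pr_i(T)\geq P_i$. Concretely, with $Q:=\max_{x\in A}\max(pr_1(x),pr_2(x))$ I would set
\[pr'_1(x)=pr_1(x)+3Q,\qquad pr'_2(x)=pr_2(x)+2Q,\]
\[P'_1=P_1+3BQ,\qquad P'_2=P_2+2BQ,\]
keeping the same ground set $A$ and the same budget $B$. The two different offsets $3Q$ and $2Q$ are arranged so that $pr'_1(x)-pr'_2(x)=pr_1(x)-pr_2(x)+Q$ stays strictly between $1$ and $pr'_2(x)$.

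Verifying the per-element constraint is then direct: $pr'_1(x)>pr'_2(x)$ because $pr_1(x)-pr_2(x)\geq 1-Q>-Q$, and $pr'_1(x)-pr'_2(x)\leq pr'_2(x)$ rewrites as $pr_1(x)\leq 2\,pr_2(x)+Q$, which holds since $pr_1(x)\leq Q$ and $pr_2(x)\geq 1$. For the equivalence of yes-instances, note that because all original profits are positive one may, without loss of generality, assume $|A|\geq B$ and pad any original witness to one of size exactly $B$. Since $pr'_i(T)=pr_i(T)+|T|\cdot c_i$ for $c_1=3Q,\,c_2=2Q$, the $Bc_i$ contributed by the shift cancels exactly the term added to $P_i$, giving $pr'_i(T)\geq P'_i\Leftrightarrow pr_i(T)\geq P_i$ when $|T|=B$. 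Conversely, any witness $T'$ on the restricted side with $|T'|\leq B$ already satisfies $pr_i(T')\geq P_i+(B-|T'|)c_i\geq P_i$, so it is a witness for the original instance as well. The transformation is polynomial in the input size and leaves $B$ unchanged, which yields the desired FPT reduction.

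The step I expect to require the most care is the simultaneous choice of the two shifts: the inequality $pr'_2<pr'_1$ forces the shift of $pr'_1$ to exceed that of $pr'_2$ by strictly more than the worst-case gap $pr_2(x)-pr_1(x)$, while $pr'_1-pr'_2\leq pr'_2$ forces the shift of $pr'_2$ to absorb the worst-case value of $pr_1(x)-2\,pr_2(x)$. The pair $(3Q,2Q)$ meets both constraints simultaneously with room to spare, and any $C_1>C_2$ satisfying $C_1-C_2\geq Q$ and $2C_2-C_1\geq Q-2$ would serve equally well.
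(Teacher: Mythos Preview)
Your proof is correct. Both you and the paper reduce from positive-valued BKP by applying an affine adjustment to the profits while leaving $B$ unchanged, but the mechanics differ. The paper proceeds in two stages: it first multiplies $pr_1$ by the scalar $Q=1+\max_{x}\lceil pr_2(x)/pr_1(x)\rceil$ to force $pr_2(x)<pr_1(x)$, and then adds the \emph{same} offset $P_0=\max\{1,\max_x(pr_1(x)-2pr_2(x))\}$ to both profit functions to obtain $pr_1(x)-pr_2(x)\leq pr_2(x)$. Your single asymmetric shift $(+3Q,+2Q)$ achieves both inequalities in one pass, which is arguably cleaner; the paper's two-stage route has the mild conceptual advantage that each step targets exactly one inequality and uses the simplest operation (pure scaling, then a uniform shift) that suffices. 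The equivalence argument---padding a witness to size exactly $B$ using positivity of the profits, so that the added $Bc_i$ cancels against the shifted threshold---is the same in both proofs.
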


\begin{proof} We reduce BKP to itself when this additional constraint is satisfied. Assume that $A=\{a_1,...,a_n\}$, $1\leq B\leq n$, $pr_1, pr_2: A\rightarrow N$ and $P_1, P_2\in N$ are given. First let us show that we can assume that for any $x\in A$ we have $pr_2(x)< pr_1(x)$. For this purpose, define
\[Q=1+\max_{x\in A}\left\lceil\frac{pr_2(x)}{pr_1(x)}\right\rceil.\]
Let $pr'_1(x)=Q\cdot pr_1(x)$, $pr'_2(x)= pr_2(x)$, $P'_1=Q\cdot P_1$ and $P'_2=P_2$. Let the budget $B$ remain unchanged. We have that there is $T\subseteq A$ with $|T|=B$ such that $pr'_i(T)\geq P'_i$ if and only if $pr_i(T)\geq P_i$ (since everything is unchanged or multiplied by the same constant). Clearly, the new instance can be obtained in polynomial time. Let us show that the new instance satisfies $pr'_2(x)< pr'_1(x)$ for any $x\in A$. The inequality $pr'_2(x)< pr'_1(x)$ is equivalent to $pr_2(x)< Q\cdot pr_1(x)$ or $\frac{pr_2(x)}{pr_1(x)}< Q$. The latter we always have by the definition of $Q$.

Thus, in the very beginning we can assume that for any $x\in A$ we have $pr_2(x)< pr_1(x)$. Let us show that we can assume the other inequality as well. For a given instance define:
\[P_0=\max\{1, \max_{x\in A}(pr_1(x)-2pr_2(x))\}.\]
Observe that $P_0\geq 1$ by definition. For any $x\in A$ and $i=1,2$ define:
\[pr'_i(x)=pr_i(x)+P_0,\]
\[P'_i=P_i+B\cdot P_0.\]
As we have added the same number to every profit, we have for any $x\in A$ $pr'_2(x)< pr'_1(x)$. Since in the problem we were looking for $|T|=B$, we can prove similarly (see the proof of Theorem \ref{thm:BKPW1hardnessB}) that there is such $T\subseteq A$ with $|T|=B$, such that $pr_i(T)\geq P_i$ if and only if $pr'_i(T)\geq P'_i$. Thus we have a reduction. Moreover it is a polynomial time reduction and the value of the budget is not changed. Thus, all we are left is to show that for any $x\in A$ we have $pr'_1(x)\leq 2\cdot pr'_2(x)$. By definition, the last inequality is the same that
\[pr_1(x)+P_0\leq 2\cdot (pr_2(x)+P_0),\]
or equivalently,
\[pr_1(x)-2\cdot pr_2(x)\leq P_0.\]
However, the last one is always true because of the definition of $P_0$. The proof is complete.
\end{proof}

\begin{theorem}
\label{thm:BKPrestrict2W1hardnessB} BKP remains W[1]-hard with respect to $B$ even if $pr_1(x)-pr_2(x)\leq pr_2(x)< pr_1(x)$ for any $x\in A$ and $\sum_{i=1}^n[pr_1(a_i)-pr_2(a_i)]<\min_{x\in A} pr_2(x)$.
\end{theorem}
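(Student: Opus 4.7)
The plan is to chain onto Theorem~\ref{thm:BKPrestrict1W1hardnessB}: start from a BKP instance that already satisfies $pr_1(x) - pr_2(x) \leq pr_2(x) < pr_1(x)$ (with positive integer profits and, without loss of generality, $|T|=B$), and massage it so that additionally $\sum_{i=1}^n [pr_1(a_i)-pr_2(a_i)] < \min_{x\in A} pr_2(x)$, while keeping the budget $B$ unchanged. The guiding observation is that adding the same positive integer $C$ to every value of both profit functions leaves the pointwise differences $pr_1(x) - pr_2(x)$ invariant, and hence leaves the sum $\sum_i [pr_1(a_i) - pr_2(a_i)]$ invariant, but increases $\min_{x\in A} pr_2(x)$ by $C$. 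This asymmetry is exactly what the new constraint is asking for.

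Concretely, I would compute $D := \sum_{i=1}^n [pr_1(a_i) - pr_2(a_i)]$, pick any $C \geq D$ (e.g.\ $C = D+1$), and define $pr'_i(x) = pr_i(x) + C$ for $i=1,2$, along with $P'_i = P_i + B\cdot C$ and $B' = B$. For any $T\subseteq A$ with $|T|=B$ one has $pr'_i(T) = pr_i(T) + B\cdot C$, so $pr'_i(T) \geq P'_i$ iff $pr_i(T)\geq P_i$; the yes/no equivalence then follows exactly as in the proofs of Theorems~\ref{thm:BKPW1hardnessB} and~\ref{thm:BKPrestrict1W1hardnessB}. Since every profit is a positive integer, the formulations $|T|\leq B$ and $|T|=B$ coincide. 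The transformation is polynomial time and preserves the parameter $B$ on the nose, so it is an FPT reduction.

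It then only remains to verify the three structural constraints on the resulting instance: first, $pr'_1(x) - pr'_2(x) = pr_1(x) - pr_2(x) \leq pr_2(x) \leq pr_2(x) + C = pr'_2(x)$, which gives both $pr'_1(x) - pr'_2(x) \leq pr'_2(x)$ and $pr'_2(x) < pr'_1(x)$; second, $\sum_i [pr'_1(a_i)-pr'_2(a_i)] = D$ while $\min_x pr'_2(x) = \min_x pr_2(x) + C \geq 1 + D > D$, by positivity of $pr_2$ and the choice of $C$. There is no real obstacle here; the only point worth flagging is that a purely multiplicative rescaling, like the one used inside the proof of Theorem~\ref{thm:BKPrestrict1W1hardnessB}, would scale the difference-sum by the same factor as $\min pr_2$ and therefore be useless for this step, whereas additive shifting, which is what one should reach for, is invariant on differences and thus pries the two quantities apart in precisely the way the statement requires.
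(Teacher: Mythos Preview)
Your proof is correct and essentially identical to the paper's: both start from Theorem~\ref{thm:BKPrestrict1W1hardnessB} and add the same large constant to both profit functions, with the paper taking specifically $T+1$ where $T=\sum_{i}[pr_1(a_i)-pr_2(a_i)]$ (your $D+1$). Your use of $D$ for the difference-sum is in fact cleaner notation than the paper's, which overloads the letter $T$ already used for subsets, and your closing remark on why an additive shift succeeds where a multiplicative one would not is a nice bit of exposition absent from the original.
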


\begin{proof} By the previous theorem, we know that BKP with the first condition satisfied remains W[1]-hard with respect to $B$. Let us reduce these instances to ones when the second condition in the theorem is satisfied, too.

Start with an instance of BKP with the first condition and let
\[T=\sum_{i=1}^n[pr_1(a_i)-pr_2(a_i)].\]
Now, let us define new profit functions and new lower bounds for our parameters in the following way:
\[pr'_i(x)=pr_i(x)+(T+1),\]
and
\[P'_i=P_i+B(T+1).\]
As before, we can show that this established a reduction. Moreover,
\[0<pr'_1(x)-pr'_2(x)\leq pr'_2(x)< pr'_1(x),\]
All we are left is to show that our new constraint is true in the new instances as well. We have:
\[\sum_{i=1}^n[pr'_1(a_i)-pr'_2(a_i)]=\sum_{i=1}^n[pr_1(a_i)-pr_2(a_i)]< T+1<pr'_2(z)\]
for any item $z$. Thus our new condition is satisfied. The proof is complete.
\end{proof}

\section{Main results}
\label{mainres}

In this section, we obtain our main results. Some of our proofs rely on the hardness results established in the previous section. We start with the following observation that will allow us to obtain some restrictions for the values of $k_1$.

\begin{observation}
\label{obs:k1tau(G)} The instances of M-EPVCB, in which $k_1\geq \tau(G)=\nu(G)$ can be solved in polynomial time.
\end{observation}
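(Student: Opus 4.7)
The plan is to exploit König's theorem, which guarantees $\tau(G) = \nu(G)$ in bipartite graphs, together with the fact that both a minimum vertex cover and a maximum-weight matching can be computed in polynomial time on bipartite graphs. The idea is simply that when the budget $k_1$ is at least $\tau(G)$, the optimum solution (with respect to weight of covered edges and weight of the induced matching) is always realized by a vertex set that covers all the edges of $G$.

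First I would compute $\tau(G)$ and a minimum vertex cover $V^*$ of $G$ in polynomial time (for instance via a maximum matching plus König's constructive argument, or via max-flow on the standard bipartite reduction). Since $|V^*| = \tau(G) \leq k_1$, the set $V^*$ is a feasible candidate with respect to the cardinality constraint. By construction $E(V^*) = E(G)$, so the total weight of its covered edges is $p(E(G))$ and the maximum-weight matching contained in $E(V^*)$ equals $\nu_p(G)$. Both of these quantities are computable in polynomial time: the former by summing edge weights, the latter by running a maximum-weight bipartite matching algorithm (Hungarian method or equivalent).

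Next I would argue that $V^*$ is, in fact, an optimal choice in a very strong sense. For any other candidate $V_0 \subseteq V$ with $|V_0| \leq k_1$, we clearly have $E(V_0) \subseteq E(V^*) = E(G)$; consequently $p(E(V_0)) \leq p(E(G))$ and $\nu_p(E(V_0)) \leq \nu_p(G)$. Therefore the instance is a ``yes''-instance if and only if both $p(E(G)) \geq k_2$ and $\nu_p(G) \geq k_3$ hold. The algorithm answers YES precisely in that case and NO otherwise.

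There is essentially no obstacle: the only nontrivial ingredient is König's theorem together with polynomial-time algorithms for minimum vertex cover and maximum-weight matching in bipartite graphs, all of which are classical. The step worth stating carefully is the monotonicity observation $E(V_0) \subseteq E(V^*)$, which justifies that considering the full edge set is without loss of generality and turns the decision problem into two straightforward polynomial-time checks.
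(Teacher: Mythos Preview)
Your proof is correct and follows essentially the same approach as the paper: compute a minimum vertex cover (which, by the assumption $k_1 \geq \tau(G)$, is feasible and covers all of $E(G)$), then reduce the decision to checking $w(E(G)) \geq k_2$ and $\nu_w(G) \geq k_3$. You are in fact slightly more careful than the paper in spelling out the monotonicity argument $E(V_0) \subseteq E(G)$ that justifies why these two checks are both necessary and sufficient.
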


\begin{proof}
Since $G$ is bipartite, we can find a smallest vertex cover in it in polynomial time. Because of our assumption, it contains at most $k_1$ vertices. Thus, in order to solve the instance, it suffices to check whether $w(E)\geq k_2$ and whether the weight of maximum weighted matching is at least $k_3$. Clearly, this can be done in polynomial time. The proof is complete.
\end{proof}

\begin{observation}
\label{obs:k2k3} The instances of M-EPVCB, in which $k_2\leq k_3$ can be solved in polynomial time.
\end{observation}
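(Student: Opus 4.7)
The plan is to observe that when $k_2 \leq k_3$, the edge-weight lower bound is automatically subsumed by the matching-weight lower bound, and then to reduce the remaining question to a cardinality-bounded maximum-weight bipartite matching problem, which is polynomial-time solvable.

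First I would argue the reduction of constraints. Suppose $V_0$ satisfies the instance, so $E(V_0)$ contains a matching $M$ with weight $w(M) \geq k_3$. Since $M \subseteq E(V_0)$ and weights are positive integers, $w(E(V_0)) \geq w(M) \geq k_3 \geq k_2$, so the constraint $w(E(V_0)) \geq k_2$ is implied. Consequently, the instance is a ``yes''-instance if and only if there exists $V_0 \subseteq V$ with $|V_0| \leq k_1$ such that $E(V_0)$ contains a matching of weight at least $k_3$.

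Next I would reformulate the question purely in matching-theoretic terms and show it is equivalent to: does $G$ contain a matching $M$ with $|M| \leq k_1$ and $w(M) \geq k_3$? For the easy direction, given such an $M$, select one endpoint of each edge of $M$ to form $V_0$; then $|V_0| \leq |M| \leq k_1$ and $M \subseteq E(V_0)$. For the other direction, if $V_0$ is a witness with associated matching $M$ satisfying $M \subseteq E(V_0)$ and $w(M) \geq k_3$, then each edge of $M$ has at least one endpoint in $V_0$; since edges of $M$ are pairwise vertex-disjoint, a choice of one covering endpoint per edge yields $|M|$ distinct vertices of $V_0$, so $|M| \leq |V_0| \leq k_1$.

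Finally, I would solve this reformulated problem in polynomial time. For each $m \in \{0, 1, \dots, k_1\}$ one can compute the maximum weight of a matching of exactly $m$ edges in the bipartite graph $G$ (for instance by padding both sides with dummy zero-weight vertices to convert matchings of size $m$ into perfect matchings of an enlarged bipartite graph, then applying the Hungarian algorithm). Taking the maximum over $m \in \{0,\ldots,k_1\}$ gives the largest weight of a matching of cardinality at most $k_1$; the instance is a ``yes''-instance iff this maximum is at least $k_3$. The one step that needs a little care is justifying polynomial solvability of the cardinality-bounded maximum-weight bipartite matching, and this is the main (mild) obstacle; it is however a classical fact, provable either combinatorially via the Hungarian method on the padded graph or via the integrality of the bipartite matching polytope intersected with a cardinality constraint. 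Everything else in the argument is straightforward.
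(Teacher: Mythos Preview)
Your proof is correct and follows essentially the same approach as the paper: observe that the coverage constraint $w(E(V_0))\ge k_2$ is subsumed by the matching constraint when $k_2\le k_3$, and then reduce to computing a maximum-weight matching of cardinality at most $k_1$. The paper phrases the first step as a reduction to the case $k_2=k_3$ and then invokes an external result on maximum weighted $k_1$-matchings, whereas you argue the equivalence to a cardinality-bounded matching directly and sketch a self-contained Hungarian-based solution; these are cosmetic differences only.
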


\begin{proof} Assume that we have an instance with $k_3>k_2$. We claim that $(G,w, k_1, k_2, k_3)$ is a ``yes"-instance, if and only if $(G,w, k_1, k_3, k_3)$ is a ``yes"-instance. Let us assume that $(G,w, k_1, k_2, k_3)$ is a ``yes"-instance. Then since the covered edges include a matching of $w$-weight at least $k_3$, we have that the covered edges are of weight at least $k_3$. Thus, $(G,w, k_1, k_3, k_3)$ is a ``yes"-instance. On the other hand, if $(G,w, k_1, k_3, k_3)$ is a ``yes"-instance. Then for any $k_2 < k_3$, we have that $(G,w, k_1, k_2, k_3)$ is a ``yes"-instance. Thus, the instances with $k_3\geq k_2$ can be reduced to those with $k_3=k_2$.

Theorem 4.6 of \cite{MaxWeightedkmatching} implies that one can find a maximum weighted $k_1$-matching (if it exists) in polynomial time. Thus, in order to solve the case $k_2=k_3$ of our problem, we just need to find a maximum weighted $k_1$-matching and check whether its weight is at least $k_2$. Hence, the instances with $k_2=k_3$ are polynomial time solvable. The proof is complete.
\end{proof}

The next proposition allows us to reduce the solution of some instances of M-EPVCB to instances of EPVCB. Since EPVCB is FPT with respect to $k_1$ \cite{iwoca20}, these instances can be solved in FPT($k_1$) time.

\begin{proposition}
\label{prop:k2/k3Delta(G)} Let $(G, w, k_1, k_2, k_3)$ be an instance of M-EPVCB with $\frac{k_2}{k_3}\geq \Delta(G)$. Then it is a ``yes"-instance if and only if $(G, w, k_1, k_2)$ is a ``yes"-instance of EPVCB.
\end{proposition}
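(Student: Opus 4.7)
The plan is to observe that the forward implication is immediate and to concentrate all effort on the converse. For the forward direction, any M-EPVCB yes-instance automatically yields an EPVCB yes-instance, because the conditions of EPVCB ($|V'|\leq k_1$ and $w(E(V'))\geq k_2$) are a proper subset of the M-EPVCB conditions — so that direction needs only a single sentence.

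For the converse, suppose $V'\subseteq V$ is a feasible solution of $(G,w,k_1,k_2)$, and let $F=E(V')$ denote the set of edges it covers, so that $w(F)\geq k_2$. I need to exhibit a matching $M\subseteq F$ with $w(M)\geq k_3$; then $V'$ is also feasible for the M-EPVCB instance. The key tool is K\"onig's edge-coloring theorem: because $G$ is bipartite, its edge-chromatic number equals $\Delta(G)$, so the edges of $G$ (and hence those of $F$) can be partitioned into $\Delta(G)$ matchings $M_1,\dots,M_{\Delta(G)}$ (some possibly empty) with $F=M_1\cup\cdots\cup M_{\Delta(G)}$ and the $M_i$ pairwise disjoint.

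A simple pigeonhole argument then finishes the proof. Since $\sum_{i=1}^{\Delta(G)} w(M_i)=w(F)\geq k_2$, at least one matching $M_{i^*}$ has weight
\[
w(M_{i^*})\;\geq\;\frac{w(F)}{\Delta(G)}\;\geq\;\frac{k_2}{\Delta(G)}\;\geq\;k_3,
\]
where the last inequality uses the hypothesis $k_2/k_3\geq \Delta(G)$. Since $M_{i^*}\subseteq F$ is a matching contained in the edges covered by $V'$, the triple $(V',\,w(F),\,w(M_{i^*}))$ certifies that $(G,w,k_1,k_2,k_3)$ is a yes-instance of M-EPVCB, completing the equivalence.

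I do not anticipate any real obstacle here: the only non-trivial ingredient is the invocation of K\"onig's edge-coloring theorem for bipartite graphs, and the rest is a one-line pigeonhole computation. The main thing to be careful about is to state the edge-coloring step explicitly (rather than, say, trying to directly construct the matching greedily), since this is what allows the hypothesis $k_2/k_3\geq \Delta(G)$ to be used cleanly.
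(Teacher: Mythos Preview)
Your proposal is correct and follows essentially the same approach as the paper: both argue that the forward direction is trivial, and for the converse both invoke K\"onig's edge-coloring theorem to partition the covered edges into $\Delta(G)$ matchings and then apply pigeonhole to find one of weight at least $k_2/\Delta(G)\geq k_3$.
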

 
\begin{proof} One direction is trivial. Let us assume that $(G, w, k_1, k_2)$ is a ``yes"-instance of EPVCB and $V'$ is the corresponding partial cover. Since $G$ is a bipartite graph, we have that the graph $G[E_{V'}]$ is a bipartite subgraph of $G$ with maximum degree at most $\Delta(G)$. Hence by K\"onig's theorem, it is $\Delta(G)$-edge-colorable. Thus, we can write
\[G[E_{V'}]=M_1\cup ... \cup M_{\Delta(G)}.\]
Here $M_1,...,M_{\Delta(G)}$ are matchings, that form the color classes of $G[E_{V'}]$. Then:
\[\max_{1\leq j\leq \Delta(G)}w(M_j)\geq \frac{w(E(G[E_{V'}]))}{\Delta(G)}\geq \frac{k_2}{\Delta(G)}\geq k_3.\]
Thus, $(G, w, k_1, k_2, k_3)$ is a ``yes"-instance of the edge-weighted version of the matching problem. The proof is complete.
\end{proof}

\begin{theorem}
\label{thm:k3atmostk1} The instances of M-EPVCB, in which $k_3\leq k_1$ can be solved in FPT($k_1$) time.
\end{theorem}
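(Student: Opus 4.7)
The plan is to prove the theorem by combining two ingredients: color coding on the vertices of $G$ to guess the required matching, and the FPT algorithm for EPVCB with respect to $k_1$ from \cite{iwoca20} to complete the partial cover.

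First, using Observations \ref{obs:k1tau(G)} and \ref{obs:k2k3} together with Proposition \ref{prop:k2/k3Delta(G)}, one may restrict to the nontrivial case $k_1 < \nu(G)$, $k_3 < k_2$, and $k_2 < k_3 \cdot \Delta(G)$. The key structural observation is that in any feasible partial cover $V'$ with $|V'| \leq k_1$, a matching $M \subseteq E(V')$ with $w(M) \geq k_3$ satisfies $|M| \leq |V'| \leq k_1$, since matching edges are vertex-disjoint and each must have at least one endpoint in $V'$; consequently $M$ has at most $2k_1$ endpoints.

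The algorithm proceeds as follows. Using the Alon--Yuster--Zwick perfect-hashing construction, one obtains a family of $2^{O(k_1)} \cdot \log |V|$ vertex colorings $\chi : V \to [2k_1]$ such that, for any vertex set of size at most $2k_1$, some coloring in the family assigns distinct colors to its vertices. For each coloring in the family, a dynamic program over the $2^{2k_1}$ color subsets $C \subseteq [2k_1]$ computes $f(C)$, the maximum weight of a colorful matching whose endpoints together use exactly the colors in $C$; transitions have the form $f(C \cup \{\chi(u), \chi(v)\}) := \max(f(C \cup \{\chi(u), \chi(v)\}), f(C) + w(u,v))$ whenever $\chi(u), \chi(v) \notin C$. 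For every $C$ with $|C|/2 \leq k_1$ and $f(C) \geq k_3$, one recovers by backtracking a candidate matching $M$. Then, for each candidate $M$ and each of the $2^{|M|} \leq 2^{k_1}$ choices of one endpoint per edge (forming a vertex set $V_0$ covering $M$), one invokes the FPT EPVCB algorithm on $G - V_0$ with budget $k_1 - |M|$ and target weight $\max\{0, k_2 - w(E(V_0))\}$ to test whether $V_0$ can be extended to a valid partial cover.

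The main technical challenge lies in the interplay between the DP-returned matching and the feasibility of the extension step: although the DP returns a specific maximum-weight colorful matching for each color subset, which might differ from the matching used in a given feasible solution, the bipartite structure of $G$ should allow one to argue that whenever any feasible solution exists, some endpoint choice together with the EPVCB extension still yields a valid partial cover. The total running time is $2^{O(k_1)} \cdot g(k_1) \cdot poly(|V|)$, where $g$ is the running-time function of the EPVCB FPT algorithm, which establishes the desired FPT($k_1$) bound.
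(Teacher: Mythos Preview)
Your proposal has a genuine gap, and it lies exactly at the step you yourself flag as the ``main technical challenge.'' Notice that your argument never uses the hypothesis $k_3\le k_1$: the bound $|M|\le |V'|\le k_1$ holds for \emph{every} matching contained in the covered edges, since each matching edge must have an endpoint in $V'$. Hence, if your algorithm were correct, it would show that M-EPVCB is FPT in $k_1$ unconditionally, contradicting Theorem~\ref{thm:W1hardnessk1}. So the hand-waved step cannot be completed.

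Concretely, for each colour set $C$ your dynamic program returns \emph{one} colourful matching $M'$ attaining $f(C)$, and over all colourings and colour sets you generate only $2^{O(k_1)}\cdot\mathrm{polylog}\,|V|$ candidate matchings. A feasible solution $(V^{*},M^{*})$ may use a matching $M^{*}$ whose colour set coincides with that of $M'$ but whose endpoints are extendable to coverage $\ge k_2$, while the endpoints of $M'$ are not; bipartiteness does not help here. Already in the W[1]-hardness instances (disjoint $2$-paths) there are $\binom{n}{k_1}$ relevant matchings (one heavy edge per chosen path), and deciding which of them admits an extension with coverage $\ge P_1$ is precisely the hard part of BKP. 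The paper's proof takes a completely different route that genuinely exploits $k_3\le k_1$ together with the integrality of the weights: it shows that in this regime the matching constraint is essentially free, because any subgraph with vertex cover number $\ge k_3$ contains a matching of \emph{size} $k_3$ and hence of weight $\ge k_3$. One therefore runs the EPVCB FPT algorithm to find the smallest $R$ with a cover of weight $\ge k_2$, and if $R\le k_1$ one augments the covered subgraph edge by edge until its vertex cover number reaches $k_3\le k_1$; no enumeration of matchings is needed.
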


\begin{proof} We follow the proof of Theorem 5 from \cite{iwoca20}. Let EPVCB($A, B$) be the FPT($A$) algorithm that solves EPVCB (see Theorem 1 of \cite{iwoca20}), and let $R$ be the smallest integer for which $EPVCB(R, k_2)$ is feasible. We can assume that $R<k_1$. Let $H$ be the subgraph induced on these edges of weight at least $\geq k_2$. By the classical K\"{o}nig theorem we have $\nu(G)=\tau(G)$ for any bipartite graph $G$. By Observation \ref{obs:k1tau(G)}, we can assume that $k_1<\tau(G)$.

Observe that we can assume that $R<k_3\leq k_1<\tau(G)$. If $R\geq k_3$, then clearly $H$ can be covered with at most $k_1$ vertices, it has weight at least $k_2$, and it has a matching of size $R\geq k_3$ hence of weight at least $k_3$. Since $\tau(H)=R<k_3\leq \tau(G)$, we have that $E(H)\neq E(G)$. Thus, there is as an edge $e$ lying outside $H$. Add $e$ to $H$. If $\tau(H)$ has increased by adding $e$, define $R:=R+1$, otherwise let $R$ be the same. Repeat this process of adding edges outside $H$. Since $\tau(H)=R<k_3\leq \tau(G)$ and at each step $\tau$ can increase by at most one, at some point we will arrive into $H$ such that $R=\tau(H)=k_3\leq k_1$. Observe that $H$ can be covered with at most $k_1$ vertices, it has weight at least $k_2$ and it contains a matching of size $k_3$, hence of weight at least $k_3$. Thus, the problem is a ``yes"-instance.

The running time of the above algorithm is FPT in $k_1$. In order to see this, just observe that we have at most $k_1$ calls of $EPVCB(R, k_2)$ and the operation of adding the edges to $H$ and computing the size of the smallest vertex cover and the largest matching in the bipartite graph $H$ can be carried out in polynomial time. The proof is complete.
\end{proof}

The four statements proved above imply 

\begin{corollary}
When parameterizing M-EPVCB with respect to $k_1$, one can focus on instances in which $k_1<k_3<k_2< k_3\cdot \Delta(G)$.
\end{corollary}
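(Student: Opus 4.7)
The plan is simply to chain the four preceding statements together: each one eliminates one regime of ``easy'' instances of M-EPVCB, and the conjunction of what remains is precisely the window $k_1<k_3<k_2<k_3\cdot\Delta(G)$. No new combinatorial argument is required; the proof is essentially bookkeeping.

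I would proceed in three steps, corresponding to the three inequalities in the statement. First, Theorem \ref{thm:k3atmostk1} solves every instance with $k_3\le k_1$ in FPT($k_1$) time, so I may assume $k_1<k_3$. Second, Observation \ref{obs:k2k3} solves every instance with $k_2\le k_3$ in polynomial time (by reducing to the $k_2=k_3$ case, which is a maximum-weight $k_1$-matching computation), so I may further assume $k_3<k_2$. Third, Proposition \ref{prop:k2/k3Delta(G)} shows that any instance with $k_2/k_3\ge\Delta(G)$ is equivalent to the corresponding EPVCB instance $(G,w,k_1,k_2)$, and since EPVCB is FPT in $k_1$ by the result cited there, such instances can also be dispatched in FPT($k_1$) time; hence I may assume $k_2<k_3\cdot\Delta(G)$. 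Observation \ref{obs:k1tau(G)} does not appear explicitly in the chain of inequalities, but it was already absorbed into the proof of Theorem \ref{thm:k3atmostk1} (which explicitly invoked it to enforce $k_1<\tau(G)=\nu(G)$), so no extra handling is needed.

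The only thing worth checking is that the three exclusions are mutually consistent and can be applied in one pass. This is immediate because each of the three reductions either solves the given instance outright or discards it from consideration; none of them transforms the instance into a new M-EPVCB instance with altered $k_1$, $k_2$, $k_3$, or $G$. Consequently the three cases can be tested in sequence on the input, and any instance that survives all three tests satisfies $k_1<k_3<k_2<k_3\cdot\Delta(G)$ simultaneously, completing the proof. There is no substantive obstacle; the subtlety, if any, is only to notice that Theorem \ref{thm:k3atmostk1} and Proposition \ref{prop:k2/k3Delta(G)} already give FPT($k_1$) procedures on their respective regimes, so discarding those regimes does not cost us anything with respect to the target running time.
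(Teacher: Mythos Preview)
Your proposal is correct and matches the paper's approach exactly: the paper gives no explicit proof but simply states that ``the four statements proved above imply'' the corollary, and you have accurately unpacked which statement yields which inequality (Theorem~\ref{thm:k3atmostk1} gives $k_1<k_3$, Observation~\ref{obs:k2k3} gives $k_3<k_2$, Proposition~\ref{prop:k2/k3Delta(G)} together with the FPT result for EPVCB gives $k_2<k_3\cdot\Delta(G)$). Your remark that Observation~\ref{obs:k1tau(G)} is already absorbed into Theorem~\ref{thm:k3atmostk1} rather than contributing a separate inequality is also on point.
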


In \cite{iwoca20}, it is shown that M-PVCB is FPT with respect to $k_1$. On the other hand, in the same paper it is proved that VPVCB and WPVCB are W[1]-hard with respect to $k_1$. Thus, their matching extensions are W[1]-hard with respect to $k_1$, too. It is interesting to wonder whether M-EPVCB is FPT with respect to $k_1$. Our next result addresses this question.

\begin{theorem} 
\label{thm:W1hardnessk1} M-EPVCB is W[1]-hard with respect to $k_1$.
\end{theorem}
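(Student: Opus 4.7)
The plan is to give an FPT-reduction from BKP (under the restrictions of Theorem 3) to M-EPVCB, using $k_1 = B$ so that the parameter is preserved exactly.

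Given a BKP instance $(A, B, P_1, P_2, pr_1, pr_2)$ with $A = \{a_1, \ldots, a_n\}$, I build a bipartite graph $G$ as follows. For every item $a_i$, introduce a left-vertex $u_i$ and two right-pendants $v_i, w_i$, together with the edges $(u_i, v_i)$ of weight $pr_2(a_i)$ and $(u_i, w_i)$ of weight $pr_1(a_i) - pr_2(a_i)$; by Theorem 3 both quantities are positive integers, so this is a valid edge-weighted bipartite graph. Set $k_1 = B$, $k_2 = P_1$, $k_3 = P_2$. The construction is clearly polynomial in the input size and leaves the parameter unchanged, so it will be an FPT-reduction as soon as we verify the yes/no equivalence.

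For the forward direction, if $S$ is a BKP solution, then $V' = \{u_i : i \in S\}$ is a partial cover of size $|S| \leq B = k_1$; the edges it covers have total weight $\sum_{i \in S} pr_1(a_i) \geq P_1 = k_2$ and include the matching $\{(u_i, v_i) : i \in S\}$ of weight $\sum_{i \in S} pr_2(a_i) \geq P_2 = k_3$. For the backward direction, start with any feasible $V'$ and apply the following cleaning step: whenever $u_j \notin V'$ but $v_j \in V'$ or $w_j \in V'$, replace the pendants at index $j$ (one or two of them) by the single vertex $u_j$. A short case check in the three subcases shows that this swap does not increase $|V'|$, does not decrease the total covered weight, and does not decrease the maximum covered-edge matching weight; the last point uses the first restriction of Theorem 3, $pr_1(a_j) - pr_2(a_j) \leq pr_2(a_j)$, which makes $(u_j, v_j)$ at least as heavy as $(u_j, w_j)$ and hence the better matching edge at $u_j$. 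After cleaning, $V' \subseteq \{u_1, \ldots, u_n\}$, and $S = \{i : u_i \in V'\}$ witnesses a BKP ``yes''-instance.

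The point that needs the most care is the matching analysis after cleaning. Because each $u_i$ has only two private pendants, every matching in the covered edge set picks at most one edge per chosen $u_i$, and the first restriction of Theorem 3 forces the optimal choice to be $(u_i, v_i)$. This yields the clean identity $\nu_w(E(V')) = \sum_{i \in S} pr_2(a_i)$, which is precisely what makes the three thresholds $(k_1, k_2, k_3)$ line up with $(B, P_1, P_2)$. Combined with the polynomial-time, parameter-preserving nature of the construction, this carries the W[1]-hardness of BKP in $B$ (Theorem 3) over to W[1]-hardness of M-EPVCB in $k_1$.
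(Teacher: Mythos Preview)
Your reduction is exactly the paper's: one disjoint $2$-path per item with edge weights $pr_2(a_i)$ and $pr_1(a_i)-pr_2(a_i)$, and $(k_1,k_2,k_3)=(B,P_1,P_2)$; the paper invokes Theorem~\ref{thm:BKPrestrict1W1hardnessB} rather than Theorem~\ref{thm:BKPrestrict2W1hardnessB}, but since the latter's instances satisfy the former's restriction this makes no difference. The only wrinkle is that your cleaning rule, as written, fires only when $u_j\notin V'$ and so does not purge a pendant $v_j$ or $w_j$ that happens to sit in $V'$ \emph{alongside} $u_j$; add the obvious ``drop redundant pendants'' step (they cover nothing $u_j$ doesn't) to make the conclusion $V'\subseteq\{u_1,\dots,u_n\}$ airtight.
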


\begin{proof} By Theorem \ref{thm:BKPrestrict1W1hardnessB}, BKP remains W[1]-hard even if for any item $x\in A$ we have 

\[pr_1(x)-pr_2(x)\leq pr_2(x)< pr_1(x).\]
Let us reduce this problem to M-EPVCB. For each item $x\in A$ put a 2-path (a path of length two) with edge-weights $w(e)=pr_1(x)-pr_2(x)$ and $w(e)=pr_2(x)$ for edges $e$ of the 2-path. Observe that we have an edge-weighted bipartite graph. Let $k_1=B$, $k_2=P_1$ and $k_3=P_2$. Let us show that we have a reduction. Observe that we can always avoid taking the degree-one vertices in our cover as we can simply take the degree-two vertex instead of it. Thus, we have a bijection among subsets of items and subsets of degree-two vertices of our bipartite graph. Moreover, for any subset $S\subseteq A$, we have
\[w(E_S)=\sum_{x\in S}[(pr_1(x)-pr_2(x))+pr_2(x)]=pr_1(S),\]
and
\[\nu_{w}(E_S)=\sum_{x\in S}\max\{pr_1(x)-pr_2(x), pr_2(x)\}=pr_2(S),\]
as we were considering the restriction of the multi-objective knapsack problem in which $pr_1(x)-pr_2(x)\leq pr_2(x)$ for any $x\in A$. Thus,
\[pr_1(S)\geq P_1 \textrm{ if and only if }w(E_S)\geq k_2,\]
and 
\[pr_2(S)\geq P_2 \textrm{ if and only if } \nu_{w}(E_S)\geq k_3.\]
Thus, we have a polynomial-time reduction. Observe that $k_1=B$, thus we have an FPT reduction. The proof is complete.
\end{proof}

In the reduction presented above, we have bounded maximum degree. Actually it is two. Thus, we have

\begin{corollary}
\label{cor:MaxDegreeFPT} Under the assumption FPT$\neq$W[1], M-EPVCB is not FPT with respect to $\Delta$.
\end{corollary}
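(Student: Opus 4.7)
The plan is to recycle the reduction constructed in the proof of Theorem \ref{thm:W1hardnessk1}, only this time tracking the maximum degree of the produced graph instead of $k_1$. In that reduction, every item $x \in A$ of a BKP instance is replaced by a disjoint 2-path with edge weights $pr_1(x) - pr_2(x)$ and $pr_2(x)$. The output graph is therefore a disjoint union of 2-paths: each vertex is either an endpoint (degree $1$) or a middle vertex (degree $2$), so $\Delta \leq 2$ regardless of the input BKP instance. This is the structural observation already flagged in the sentence preceding the corollary statement.

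The next step is to reinterpret that same reduction as an FPT-reduction from (BKP, $B$) to (M-EPVCB, $\Delta$). Conditions (i) and (iii) in the definition of FPT-reduction carry over verbatim from the proof of Theorem \ref{thm:W1hardnessk1}: correctness has already been established there, and the construction runs in polynomial time. The only new thing to verify is condition (ii), which asks for a computable function $h$ that bounds the new parameter by $h$ applied to the old one. Since $\Delta \leq 2$ uniformly on all output instances, the constant function $h \equiv 2$ suffices.

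Combining this FPT-reduction with the W[1]-hardness of (BKP, $B$) from Theorem \ref{thm:BKPrestrict1W1hardnessB} (in fact its restricted variant used in Theorem \ref{thm:W1hardnessk1}) yields that (M-EPVCB, $\Delta$) is W[1]-hard. Under the assumption FPT $\neq$ W[1], this rules out an FPT algorithm for M-EPVCB parameterized by $\Delta$, which is exactly what the corollary asserts. There is no real technical obstacle to overcome; the content of the argument is the simple observation that the graph built in the earlier reduction has maximum degree two, which immediately upgrades the hardness in $k_1$ to hardness in $\Delta$.
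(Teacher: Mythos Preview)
Your proposal is correct and matches the paper's own reasoning: the paper simply notes, in the sentence immediately preceding the corollary, that the reduction of Theorem~\ref{thm:W1hardnessk1} produces graphs of maximum degree two, which is exactly the observation you spell out. Your added verification of the three FPT-reduction conditions is more explicit than the paper's one-line justification but follows the same route.
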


Since $\Delta(G)=2$ in the above reduction we have that M-EPVCB is W[1]-hard with respect to $k_1+\Delta(G)$. On the other hand, because of Observation \ref{obs:k1tau(G)}, we can always assume that $k_1<\tau(G)=\nu(G)$. Thus, one may wonder whether M-EPVCB is FPT with respect to $\nu(G)+\Delta(G)$. Since by K\"onig's theorem, $E(G)$ can be partitioned into $\Delta(G)$ matching, we have that $|E|\leq \Delta(G)\cdot \nu(G)$. Thus, M-EPVCB is FPT with respect to the mentioned parameter.

Though M-EPVCB is W[1]-hard with respect to $k_1$, it is easy to show that it is FPT with respect to the complementary parameter $|V|-k_1$. In order to see this, in the given instance we can check whether $|V|-k_1\geq \frac{|V|}{2}$. If it holds, then $|V|\leq 2\cdot (|V|-k_1)$. In this case, we can generate all subsets $X$ of $V$, and for each $X$, we check that $|X|\leq k_1$, it has coverage at least $k_2$ and the includes a matching of weight at least $k_3$. These conditions can be checked in polynomial time. Thus, this case we can overcome in FPT($|V|-k_1$) time. On the other hand, if $|V|-k_1\leq \frac{|V|}{2}$, then $k_1\geq \frac{|V|}{2}$. Since in any bipartite graph $G$ $\tau(G)\leq \frac{|V|}{2}$ (just take the smallest set in the bipartition of $G$), we have $\tau(G)\leq k_1$. Observation \ref{obs:k1tau(G)} implies that these instances can be solved in polynomial time.

PVCB and its weighted extensions considered in this paper are NP-hard. Below we show that M-EPVCB remains hard for very restricted subclass of regular bipartite graphs.

\begin{theorem}
\label{thm:bipregular} M-EPVCB is NP-hard in bipartite regular graphs.
\end{theorem}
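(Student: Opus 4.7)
The plan is to reduce from the unweighted partial vertex cover problem PVCB on general bipartite graphs, which is NP-hard \cite{Apoll,pvcbpaper,CS14,Joret}. Given an instance $(G,k_1,k_2)$ of PVCB with $G=(A,B,E)$, I would construct a $\Delta$-regular bipartite graph $G'$ (with $\Delta=\Delta(G)$) and an edge weighting $w'$ so that $(G,k_1,k_2)$ is a yes-instance of PVCB if and only if $(G',w',k_1,k_2,1)$ is a yes-instance of M-EPVCB. The key idea is to pad $G$ with extra vertices and zero-weight edges that raise every degree to $\Delta$ without contributing to either the coverage bound or the matching bound.

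For the construction, first balance the bipartition by adding isolated vertices until $|A|=|B|=n$. For each $v\in V(G)$ with $\deg_G(v)=d_v<\Delta$, attach $\Delta-d_v$ fresh pendant ``padder'' vertices on the opposite side, which raises every original vertex to degree $\Delta$. A double count shows that exactly $N=\Delta n-|E|$ padders appear on each side. Next, add a $(\Delta-1)$-regular bipartite graph (for instance a suitable circulant) between the $N$ padders on the $A$-side and the $N$ padders on the $B$-side, so that the padders too reach degree $\Delta$; such an overlay exists whenever $N\geq \Delta-1$, a condition that is always achievable by first appending a bounded number of isolated vertex pairs to $G$. Finally set $w'(e)=1$ for every $e\in E$ and $w'(e)=0$ on every padding edge; the resulting $G'$ is bipartite, $\Delta$-regular, and is produced in polynomial time.

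For equivalence, the forward direction is immediate: any PVCB solution $V_0\subseteq V(G)$ covers the same $\geq k_2$ original edges (each of weight $1$) in $G'$, and any single covered edge is a weight-$1$ matching, hence satisfies the $k_3=1$ constraint. Conversely, since only original edges carry positive weight, for any M-EPVCB solution $V_0'$ the quantity $w'(E_{V_0'})$ equals the number of $G$-edges with some endpoint in $V_0'\cap V(G)$; therefore $V_0:=V_0'\cap V(G)$ satisfies $|V_0|\leq k_1$ and $|E_G(V_0)|\geq k_2$, yielding a valid PVCB solution. The main obstacle is the combinatorial bookkeeping of the padding step: balancing the number of padders across the two sides and ensuring that the $(\Delta-1)$-regular overlay on them actually exists. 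Both points are handled by the preprocessing above, after which the equivalence of the two instances is essentially automatic because padding edges contribute nothing to either of the bounds $k_2$ and $k_3=1$.
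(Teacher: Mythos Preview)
Your approach is sound and takes a genuinely different route from the paper. The paper reduces M-EPVCB to its own restriction on regular bipartite graphs: it embeds the given edge-weighted instance $(G,w,k_1,k_2,k_3)$ into a $\Delta(G)$-regular bipartite supergraph $G'$, multiplies all original edge weights by $C=|V|^2$, gives every new edge weight~$1$, and sets $k_2'=Ck_2$, $k_3'=Ck_3$. Since fewer than $C$ new edges are added, their total contribution cannot affect either threshold, and one recovers a feasible set for the original instance. Your reduction from unweighted PVCB with $k_3=1$ is arguably cleaner, because the matching constraint becomes essentially vacuous and only the coverage bound needs to be tracked.

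There is, however, one gap: you assign weight~$0$ to the padding edges, but the problem statement takes edge weights in $\mathbb{N}$, and the paper's own proof deliberately avoids zero weights via the scaling trick---strongly indicating that weights are intended to be strictly positive. The fix is exactly that device: give every padding edge weight~$1$, multiply the original unit weights by a polynomially bounded constant $C$ exceeding the number of padding edges, and replace the targets by $k_2'=Ck_2$ and $k_3'=C$. Then any feasible $V_0'$ in $G'$ with coverage at least $Ck_2$ must cover at least $k_2$ original edges (the padding edges together contribute less than $C$), and any single covered original edge already yields a matching of weight $C=k_3'$. With this amendment your construction and equivalence argument go through unchanged.
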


\begin{proof} We reduce M-EPVCB to its restriction in bipartite regular graphs. For a given weighted bipartite graph, define $C=|V|^2$. First, let us embed our bipartite graph $G$ into a bipartite $\Delta(G)$-regular graph $G'$ in a standard way. That is, we add new isolated vertices to the smallest set in the bipartition of $G$, so that two sets have equal size. Then we start adding new edges so that the graph remains bipartite and becomes regular. The number of newly added edges is less than $(|V|-1)^2<|V|^2=C$ as the size of larger part of $G$ is at most $|V|-1$ and minimum degree we can assume to be at least 1 and maximum degree is at most $|V|-1$. Next, we define the new weight function on $G'$ as follows: $w'(e)=C\cdot w(e)$ for old edges of $G$ and $w'(e)=1$ for new edges of $G$. Finally, for a given $k_1, k_2, k_3$ define $k'_1=k_1$, $k'_2=Ck_2$ and $k'_3=Ck_3$. Since $\log C\leq poly(size)$, we have that we have increased the parameters polynomially. Observe that the reduction is polynomial time. 

Let us show $(G, w, k_1, k_2, k_3)$ is a ``yes"-instance, if and only if $(G', w', k'_1, k'_2, k'_3)$ is a ``yes"-instance. If in $G$ we have a feasible solution then clearly it is feasible in $G'$ as everything is multiplied by $C$. Now, assume that $V_0$ is a feasible set in $G'$. Let $E_1$ be the set of new edges. We have
$w'(E_{V_0})\geq k'_2=Ck_2$ and $\nu_{w'}(E_{V_0})\geq k'_3=Ck_3$. Hence
\[w'(E_{V_0}\backslash E_1)\geq Ck_2-|E_1|>Ck_2-|V|^2=Ck_2-C=C(k_2-1)\]
and similarly
\[\nu_{w'}(E_{V_0}\backslash E_1)\geq Ck_3-|E_1|>Ck_3-|V|^2=Ck_3-C=C(k_3-1).\]
Thus,
\[w(E_{V_0}\cap E(G))=w(E_{V_0})>k_2-1\]
and 
\[\nu_{w}(E_{V_0}\cap E(G))=\nu_{w}(E_{V_0})>k_3-1.\]
Thus, $w(E_{V_0})\geq k_2$ and $\nu_{w}(E_{V_0})\geq k_3$. We have reduced the edge-weighted matching problem to its restriction in bipartite regular graphs in polynomial time. Thus, the problem is NP-hard in bipartite regular graphs. The proof is complete.
\end{proof}

\begin{remark}
Observe that the new vertices in the reduction do not play a role since they do not cover the old edges. Hence if we have a coverage greater than $C(k_2-1)$ in $G'$, then all these edges will be covered with old vertices. Thus they will give rise to a coverage larger than $k_2-1$ in $G$ with vertices of $G$.
\end{remark}

The strategy of the proof of the previous theorem implies the following corollary:

\begin{corollary}
\label{cor:completebipKtt} M-EPVCB remains NP-hard in complete bipartite graphs $K_{t,t}$.
\end{corollary}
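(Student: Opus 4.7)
The plan is to adapt the argument of Theorem \ref{thm:bipregular} almost verbatim, strengthening the embedding step so that the host graph is a complete bipartite graph rather than merely a bipartite regular one. Since $K_{t,t}$ is itself regular, the same scaling trick should transfer with essentially no change.

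Concretely, given an instance $(G,w,k_1,k_2,k_3)$ of M-EPVCB with bipartition $(V_1,V_2)$, I would assume without loss of generality that $|V_1|\leq|V_2|$ and set $t=|V_2|$. I would first pad the smaller part by adding $t-|V_1|$ new isolated vertices, and then insert all missing edges between the two parts so that the resulting graph $G'$ is exactly $K_{t,t}$. The number of newly added edges is at most $t^2\leq|V|^2$, so taking $C=|V|^2$ as the scaling factor, I would set $w'(e)=C\cdot w(e)$ on the old edges of $G$, $w'(e)=1$ on the new edges, $k'_1=k_1$, $k'_2=Ck_2$, and $k'_3=Ck_3$. The whole transformation is clearly polynomial-time and the budget parameter $k_1$ is preserved.

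The correctness argument would follow Theorem \ref{thm:bipregular} line by line. The forward direction is immediate after scaling by $C$. For the reverse direction, given a feasible cover $V_0$ in $G'$, the total contribution of new edges to $w'(E_{V_0})$ is bounded by the number of new edges, which is strictly less than $C$; hence the old edges covered by $V_0$ already carry $w'$-weight strictly greater than $C(k_2-1)$ and contain a matching of $w'$-weight strictly greater than $C(k_3-1)$, so after dividing by $C$ the bounds $k_2$ and $k_3$ are met in $G$.

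The only delicate point, compared to Theorem \ref{thm:bipregular}, is that in $K_{t,t}$ the newly added isolated vertices become adjacent to old vertices across the bipartition, whereas in the regular-graph construction they could be arranged to touch only newly-added edges in a more controlled way. One might worry that picking such a new vertex actually helps the cover; however, every edge incident to a new vertex is by construction a new edge of $w'$-weight $1$, so a new vertex covers only weight-one edges and can always be replaced by an old vertex without loss. Thus the remark following Theorem \ref{thm:bipregular} adapts with purely notational changes, and I do not anticipate any genuine obstacle beyond this bookkeeping.
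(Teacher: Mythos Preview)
Your proposal is correct and is exactly the approach the paper takes: the paper's one-line proof is precisely ``if we want to obtain a complete bipartite graph in the reduction, we only need to continue adding edges of weight~$1$,'' and you have simply written this out in full with the same scaling constant $C=|V|^2$. Your handling of the new padding vertices (they cover only weight-$1$ edges, hence contribute nothing to the old-edge count) matches the paper's Remark following Theorem~\ref{thm:bipregular}, so there is no gap.
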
 Just observe that if we want to obtain a complete bipartite graph in the reduction, we only need to continue adding edges of weight 1.

The proved hardness result has some consequences. In complete bipartite graphs $K_{t,t}$, we have that $\nu_{ind}(G)=1$, the domination number is 2, $2\alpha(G)-|V|=0$ ($\alpha(G)$ is the size of the largest independent set in $G$) and $\Delta(G)\cdot \nu(G)-|E(G)|=0$ (this is true for any bipartite regular graph). Thus, M-EPVCB is paraNP-hard with respect to these parameters. Below we obtain a hardness result with respect to $|V|-2\nu_{ind}(G)$.

\begin{theorem}
\label{thm:|V|minues2nuind(G)} Under the assumption FPT$\neq$W[1], M-EPVCB cannot be FPT with respect to $|V|-2\nu_{ind}(G)$.
\end{theorem}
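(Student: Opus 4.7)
The plan is to refine the reduction of Theorem \ref{thm:W1hardnessk1} by \emph{identifying} the pendant endpoints of the $n$ two--paths into a single common vertex, so that the resulting bipartite graph has a near--perfect induced matching. Since Theorem \ref{thm:BKPrestrict2W1hardnessB} guarantees that BKP is still W[1]-hard with respect to $B$ under both restrictions on the profit functions, an FPT reduction from this variant of BKP to M-EPVCB in which $|V|-2\nu_{ind}(G)$ is bounded by a constant suffices.

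\textbf{Construction.} Given an instance $(A,pr_1,pr_2,B,P_1,P_2)$ of BKP satisfying $pr_1(x)-pr_2(x)\leq pr_2(x)<pr_1(x)$ for every $x\in A$ and $\sum_{i=1}^n[pr_1(a_i)-pr_2(a_i)]<\min_{x\in A}pr_2(x)$, I build a bipartite graph $G$ on vertex set $\{u\}\cup\{v_1,\dots,v_n\}\cup\{w_1,\dots,w_n\}$, with $\{u,w_1,\dots,w_n\}$ on one side and $\{v_1,\dots,v_n\}$ on the other. For each $i$ I add the edge $uv_i$ of weight $pr_1(a_i)-pr_2(a_i)$ and the edge $v_iw_i$ of weight $pr_2(a_i)$. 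Finally, set $k_1=B$, $k_2=P_1$, $k_3=P_2$. Since $\{v_1w_1,\dots,v_nw_n\}$ is an induced matching of size $n$ and $|V|=2n+1$, we get $|V|-2\nu_{ind}(G)=1$; hence the parameter $|V|-2\nu_{ind}(G)$ is bounded by the constant $1$, so the reduction is FPT (in fact it preserves the parameter up to a constant).

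\textbf{Equivalence of instances.} The forward direction is immediate: for a BKP-solution $S\subseteq A$ take $V_0=\{v_i:a_i\in S\}$; then $|V_0|\leq B$, the covered edges have total weight $\sum_{a_i\in S}pr_1(a_i)=pr_1(S)\geq P_1$, and the matching $\{v_iw_i:a_i\in S\}$ has weight $pr_2(S)\geq P_2$. For the backward direction, let $V_0$ be a feasible cover. First, each $w_i\in V_0$ may be replaced by $v_i$ without decreasing coverage or matching weight, so WLOG $V_0\subseteq\{u,v_1,\dots,v_n\}$. The key step is to show we can also assume $u\notin V_0$: writing $V_0=\{u\}\cup T$ with $T\subseteq\{v_1,\dots,v_n\}$, pick any $v_{j^*}\notin T$ (such $j^*$ exists since we may assume $B\leq n$) and let $V_0'=T\cup\{v_{j^*}\}$. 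A direct computation gives
\[\mathrm{coverage}(V_0')-\mathrm{coverage}(V_0)=pr_2(a_{j^*})-\sum_{i\notin T,\, i\neq j^*}[pr_1(a_i)-pr_2(a_i)]>0\]
by the third restriction of Theorem \ref{thm:BKPrestrict2W1hardnessB}. For the matching weight, any matching in the edges covered by $V_0$ can use at most one edge $uv_{i^*}$, so its weight is at most $\sum_{v_i\in T}pr_2(a_i)+\sum_i[pr_1(a_i)-pr_2(a_i)]<pr_2(T\cup\{a_{j^*}\})$, which is precisely the weight of the matching $\{v_iw_i:v_i\in V_0'\}$ available to $V_0'$. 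Hence $V_0'$ is still feasible and omits $u$. With $u\notin V_0$ the set $S=\{a_i:v_i\in V_0\}$ satisfies $|S|\leq B$, $pr_1(S)\geq P_1$, $pr_2(S)\geq P_2$, producing a BKP-solution.

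\textbf{Main obstacle.} The delicate part is the backward direction, specifically ruling out $u\in V_0$: one has to juggle both the coverage and the matching-weight constraints simultaneously while swapping $u$ for a $v_{j^*}$, and this is exactly where the third restriction $\sum_i[pr_1(a_i)-pr_2(a_i)]<\min_x pr_2(x)$ provided by Theorem \ref{thm:BKPrestrict2W1hardnessB} is essential---without it the shared vertex $u$ could be profitable for the adversary and the reduction would break. Everything else (bipartiteness, computing $\nu_{ind}(G)$, polynomial-time construction) is routine.
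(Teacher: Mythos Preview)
Your proof is correct and follows essentially the same approach as the paper: you reduce from the restricted BKP of Theorem~\ref{thm:BKPrestrict2W1hardnessB}, identify one pendant from each $2$-path into a common vertex $u$ (so that $|V|-2\nu_{ind}(G)=1$), and use the extra inequality $\sum_i[pr_1(a_i)-pr_2(a_i)]<\min_x pr_2(x)$ to argue that $u$ can be swapped out of any feasible cover. Your backward direction is in fact spelled out in more detail than the paper's, which only sketches why the swap preserves both the coverage and matching bounds.
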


\begin{proof} We reduce from the restriction of BKP from Theorem \ref{thm:BKPrestrict2W1hardnessB}. For a given instance of BKP from this theorem, consider the disjoint 2-paths from the proof of Theorem \ref{thm:W1hardnessk1}. Now, from each 2-path take one vertex of degree 1, and identify these $n$ vertices in order to get the tree $G'$. The resulting vertex $z$ in $G'$ has degree $n$. Observe that $|V(G')|-2\nu_{ind}(G')=1$.

Let us show that the instance of BKP is a ``yes"-instance, if and only if the instance of M-EPVCB is a ``yes"-instance. Forward direction is trivial. Let us prove the converse statement. Assume that the instance of M-EPVCB is a ``yes"-instance. It suffices to show that there is a feasible set that does not take $z$. Assume that we have a feasible set $V_0$ in M-EPVCB. We can assume that $z\in V_0$. If all neighbors of $z$ are in $V_0$, then we can remove it without losing feasibility. Thus, we can assume that at least one neighbor of $z$ does not belong to $V_0$. Replace $z$ with this neighbor in order to obtain a set $V_1$. Observe that because of the condition 
\[\sum_{i=1}^n[pr_1(a_i)-pr_2(a_i)]<\min_{x\in A} pr_2(x)\]
we have the same lower bounds for the coverage and the maximum weighted matching of covered edges. Thus, we have a feasible set that avoids $z$. The proof is complete.
\end{proof}

In the previous theorem we showed that M-EPVCB is hard already when $|V|-2\nu_{ind}(G)=1$. One may wonder what happens when $|V|-2\nu_{ind}(G)=0$. In this case, we have an induced perfect matching in $G$. Thus, $G$ is 1-regular. Therefore, by taking $k_1$ edges of maximum weight, we can check whether these edges have coverage at least $k_2$ and $k_3$. If they do, we have a ``yes"-instance. Otherwise, it is a ``no"-instance. Clearly, this can be done in polynomial time.

\begin{theorem}
\label{thm:pathscycles} Under the assumption FPT$\neq$W[1], M-EPVCB is not polynomial time solvable in paths and cycles.
\end{theorem}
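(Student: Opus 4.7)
The plan is to exhibit a polynomial-time many-one reduction from the BKP restriction of Theorem \ref{thm:BKPrestrict1W1hardnessB} (which is W[1]-hard in the budget $B$, and therefore admits no polynomial-time algorithm under FPT$\neq$W[1], since any polynomial-time algorithm would trivially be FPT) to M-EPVCB on single paths, and analogously on single cycles. A polynomial-time algorithm for either restricted problem would then yield a polynomial-time algorithm for BKP, contradicting FPT$\neq$W[1].

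The construction chains the 2-path gadgets from the proof of Theorem \ref{thm:W1hardnessk1}. For each item $a_i$ introduce fresh vertices $u_i,v_i,w_i$ and edges $u_iv_i$ of weight $\alpha_i:=pr_1(a_i)-pr_2(a_i)$ and $v_iw_i$ of weight $\beta_i:=pr_2(a_i)$; join consecutive gadgets by a linker edge $w_iu_{i+1}$ of weight $0$, producing the single path $u_1v_1w_1u_2v_2w_2\cdots u_nv_nw_n$. For the cycle version, add one further closing linker $w_nu_1$ of weight $0$. Set $k_1:=B$, $k_2:=P_1$, and $k_3:=P_2$. For the forward direction, a BKP solution $S$ maps to $V_0:=\{v_i:a_i\in S\}$, which covers exactly the $2|S|$ gadget edges indexed by $S$ and no linker, giving coverage $pr_1(S)\geq P_1$ and, by picking each heavier edge $v_iw_i$ of weight $\beta_i\geq\alpha_i$, a matching of weight $pr_2(S)\geq P_2$.

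For the reverse direction I normalize any feasible $V_0$ by the rewriting rules: if $u_i\in V_0$ (resp.\ $w_i\in V_0$) and $v_i\notin V_0$, replace that vertex by $v_i$; if $u_i$ or $w_i$ already lies in $V_0$ together with $v_i$, simply delete it. Since linker edges carry weight $0$, each rewriting step is $|V_0|$-non-increasing and preserves both the coverage weight and the maximum matching weight in the covered edge set (the only edges that may leave that set are weight-$0$ linkers, which can be dropped from any matching for free). After exhaustive normalization $V_0\subseteq\{v_1,\ldots,v_n\}$, and $S:=\{a_i:v_i\in V_0\}$ is a valid BKP solution with $|S|\leq B$, $pr_1(S)\geq P_1$, and $pr_2(S)\geq P_2$. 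The main technical nuance is handling the linker weights cleanly: if the convention of the problem forbids zero edge weights, one reruns the same argument with every linker weight set to $1$ and each gadget weight, together with $k_2$ and $k_3$, scaled up by a sufficiently large polynomial factor $L>n$, so that each per-swap gadget-edge gain $L\beta_i$ dominates the cumulative linker-weight loss incurred during normalization. The construction is polynomial in the input size and produces a single path (respectively a single cycle), which completes the proof.
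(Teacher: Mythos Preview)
Your argument is essentially the paper's own approach: chain the $2$-path gadgets of Theorem~\ref{thm:W1hardnessk1} into a single path (respectively cycle) using light linker edges, so that a polynomial-time algorithm on paths/cycles would solve the W[1]-hard source problem. The paper does this by invoking the scaling trick of Theorem~\ref{thm:bipregular} (multiply the old weights by $C=|V|^2$ and give each linker weight~$1$), whereas your primary route uses weight-$0$ linkers together with an explicit normalization of $V_0$ into $\{v_1,\dots,v_n\}$. Both are correct; your zero-weight version is self-contained, and your fallback with scaling by $L>n$ is precisely the paper's argument (the clean bound is that the total linker weight is at most $n$, so coverage and matching drop by at most $n<L$ under normalization, giving $pr_i(S)>P_i-1$ and hence $pr_i(S)\geq P_i$---your phrase ``per-swap gain $L\beta_i$ dominates'' is slightly misleading, since the gain can be zero when $w_i\in V_0$, but the global bound suffices).

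One small point you should patch in the cycle case: your construction produces a cycle of length $3n$, which is odd (hence not bipartite, so not an instance of M-EPVCB) whenever $n$ is odd. This is easily repaired---e.g., subdivide one linker with a single extra vertex, or append one dummy $2$-path with tiny weights---but it should be stated explicitly. (The paper's proof is equally terse on this point.)
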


\begin{proof} In Theorem \ref{thm:W1hardnessk1}, we have shown that M-EPVCB is W[1]-hard with respect to $k_1$ in vertex-disjoint 2-paths. Now, we are going to reduce these instances to cycles and paths in polynomial time. Clearly, this will prove the statement.

We follow the strategy of the proof of Theorem \ref{thm:bipregular}. Assume that $G$ is a vertex union of 2-paths. Let us take a constant $C=|V|^2$ and define the new values of parameters as we did in the proof of Theorem \ref{thm:bipregular}. Now, in order to obtain cycles or paths, we add edges of weight 1. Since $G$ is of maximum degree two, this is always possible. As in Theorem \ref{thm:bipregular}, one can prove that the original instance is a ``yes"-instance, if and only if the new instance is a ``yes"-instance. The proof is complete.
\end{proof}

In Theorem \ref{thm:|V|minues2nuind(G)}, we proved that M-EPVCB remains hard in a class of trees of radius two and diameter four. This implies that under the assumption FPT$\neq$W[1], M-EPVCB cannot be FPT with respect to $diam(G)$ and $rad(G)$. One may ask question about the parameters $|V|-diam(G)$ and $|V|-rad(G)$. Observe that in paths, we have that $|V|-diam(G)$ is constant. Thus, under the assumption FPT$\neq$W[1], M-EPVCB cannot be FPT with respect to it. On the other hand, for any graph $G$, we have $rad(G)\leq \frac{|V|}{2}$. Thus, $|V|-rad(G)\geq \frac{|V|}{2}$. Thus, M-EPVCB is FPT with respect to $|V|-rad(G)$. Finally, let us note that in paths $|V|-2\cdot rad(G)$ is constant, too. Thus, under the assumption FPT$\neq$W[1], M-EPVCB cannot be FPT with respect to it, too.

Observe that M-EPVCB is hard with respect to $|V_1|$ as cycles demonstrate. In these instances we have $|V_1|=0$. Also, observe that the problem is hard with respect to $|V_{\geq 3}|$ as paths demonstrate. 

\begin{theorem}
\label{thm:|V|geq2} M-EPVCB is FPT with respect to $|V_{\geq 2}|$.
\end{theorem}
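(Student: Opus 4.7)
The plan is to exploit the smallness of $k=|V_{\geq 2}|$ by branching both over the part of the cover lying in $V_{\geq 2}$ and over the structure of the witnessing matching restricted to $V_{\geq 2}$. First I would classify every edge of $G$ into one of three types: \emph{internal} edges have both endpoints in $V_{\geq 2}$ (so there are at most $\binom{k}{2}$ of them); \emph{pendant} edges have exactly one endpoint in $V_{\geq 2}$ and the other in $V_1$; and \emph{isolated} edges have both endpoints in $V_1$ (so they form components isomorphic to $K_2$). Because every vertex in $V_1$ has degree one, putting such a vertex into the cover costs one unit of the vertex budget and covers exactly one edge; the non-trivial decisions therefore concern which pendant and isolated edges to cover.

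The first branching step enumerates $S\subseteq V_{\geq 2}$ (at most $2^k$ options); this fixes the automatically covered set $E_S$ of weight $W_0=w(E_S)$. The second step enumerates a \emph{matching plan} $\pi\colon V_{\geq 2}\to V_{\geq 2}\cup\{P,\bot\}$ that describes the witnessing matching restricted to $V_{\geq 2}$: $\pi(v)=u$ (with $\pi(u)=v$) means $v$ is matched along an internal edge $(u,v)\in E_S$, which forces $\{u,v\}\cap S\neq\emptyset$; $\pi(v)=P$ means $v$ is matched via a pendant edge; and $\pi(v)=\bot$ means $v$ is unmatched. The number of valid plans is at most $(k+2)^k=2^{O(k\log k)}$, a function of $k$ only.

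Given $(S,\pi)$, I would commit to using, at each $v$ with $\pi(v)=P$, the heaviest pendant at $v$ (of weight $h_v$), so the matching weight contributed by $V_{\geq 2}$ becomes $W_\pi=\sum_{e\in M_\pi} w(e)+\sum_{\pi(v)=P}h_v$, where $M_\pi$ is the set of internal edges used by $\pi$. Any remaining matching-weight shortfall $\max(0,k_3-W_\pi)$ must come from isolated edges, each of which automatically lies in every matching that contains it. The residual task is then to use at most $k_1-|S|$ vertex picks from $V_1$ so that: (i) the $V_1$-endpoint of the heaviest pendant at each $v\in V_{\geq 2}\setminus S$ with $\pi(v)=P$ is picked; (ii) the picked isolated edges have total weight at least $\max(0,k_3-W_\pi)$; (iii) the total covered weight is at least $k_2$. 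Condition (ii) is met with the shortest heaviest-weight prefix of isolated edges; the remaining budget is then spent greedily on the heaviest still-available pendant and isolated edges to try to satisfy (iii).

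The main point needing care is the exchange argument justifying these greedy choices. Swapping a lighter admissible pendant pick for a heavier one at the same $v\in V_{\geq 2}$, or a lighter isolated pick for a heavier one overall, uses the same budget, never decreases coverage, and never decreases the matching weight, because the matching uses at most one pendant per $v\in V_{\geq 2}$ and every isolated edge in the cover automatically contributes to the matching. Once this is established, the feasibility of each pair $(S,\pi)$ is decidable in polynomial time, and the total running time is $2^k\cdot(k+2)^k\cdot\mathrm{poly}(|V|)$, i.e.\ FPT in $|V_{\geq 2}|$.
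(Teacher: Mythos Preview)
Your proposal is correct, but it takes a noticeably more elaborate route than the paper. The paper's argument rests on two simplifications you do not use. First, an easy swap shows that a degree-one vertex attached to some $v\in V_{\ge 2}$ is never needed in an optimal cover: replacing it by $v$ (or deleting it if $v$ is already chosen) can only enlarge the set of covered edges. Hence any witness can be assumed to consist of a subset $X\subseteq V_{\ge 2}$ together with endpoints of isolated edges, and there is no need to ever pick pendant $V_1$-vertices at all. Second, once the cover is fixed, a maximum-weight matching among the covered edges can be computed directly in polynomial time, so there is no reason to guess the matching structure. The paper therefore simply enumerates $X\subseteq V_{\ge 2}$, appends the $k_1-|X|$ heaviest isolated edges (which, being pairwise disjoint and disjoint from $E_X$, simultaneously maximise both the coverage and the matching weight for that $X$), and tests feasibility. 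This yields a $2^{|V_{\ge 2}|}\cdot\mathrm{poly}(n)$ algorithm; the paper additionally observes that when $2^{|V_{\ge 2}|}\le |V|$ this is in fact polynomial.

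Your branching over matching plans $\pi$ and your allowance for pendant picks are thus redundant machinery. They are not wrong---your exchange arguments go through, and the enumeration over $(S,\pi)$ absorbs any suboptimal choice inside a single branch---but they inflate the running time to $2^{O(k\log k)}\cdot\mathrm{poly}(n)$ and make the correctness proof more delicate (in particular the interaction between steps~(ii) and~(iii) needs the two-objective greedy argument you only sketch). The paper's version avoids this entirely by pushing the matching computation into the polynomial-time feasibility check.
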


\begin{proof} For a given instance of M-EPVCB, we consider two cases. If $\log |V|\leq |V_{\geq 2}|$, then $|V|$ is bounded in terms of our parameter. Thus, we can solve these instances in FPT($|V_{\geq 2}|$) time. Now assume that $\log |V|\geq |V_{\geq 2}|$. Observe that our graph can be represented as the vertices of $V_{\geq 2}$ that may or may not be joined to some vertices of $V_1$, plus we may have isolated edges (we can ignore isolated vertices). Observe that if our problem is a yes-instance, then there is a solution that takes some vertices from $V_{\geq 2}$ plus some independent vertices from the vertices of these isolated edges. Thus, we can consider the following simple algorithm: let us generate all subsets $X$ of $V_{\geq 2}$ that have size at most $k_1$. We have $2^{|V_{\geq 2}|}\leq |V|$ possibilities. For each of these choices we add $k_1-|X|$ independent vertices from isolated edges that have the largest coverage. We test the resulting set for feasibility. Clearly, this algorithm solves our problem exactly. Moreover, in this case the running time is polynomial. Thus, M-EPVCB is FPT with respect to $|V_{\geq 2}|$. The proof is complete.
\end{proof}

\section{Future Work}
\label{conc}

In this paper, we have shown that M-EPVCB is W[1]-hard or paraNP-hard with respect to many parameters. We also observed that in case of some parameters the problem is FPT. There are questions that deserve further investigation. Below we present some of them.

We have observed that M-EPVCB is FPT with respect to $|V|-k_1$. It would be interesting to investigate its hardness with respect to $|V|-2\cdot k_1$. 

In Theorem \ref{thm:W1hardnessk1}, we have shown that M-EPVCB is W[1]-hard with respect to $k_1$. It would be interesting to strengthen this result and show

\begin{conjecture}
\label{conj:minXY} M-EPVCB is W[1]-hard with respect to $\min\{|X|, |Y|\}$.
\end{conjecture}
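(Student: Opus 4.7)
The plan is to produce an FPT-reduction from a suitable W[1]-hard problem to M-EPVCB whose output is a bipartite graph $G=(V_1,V_2,E)$ satisfying $\min\{|V_1|,|V_2|\}\leq g(k)$ for some computable $g$, where $k$ is the parameter of the source problem. A first observation is that one cannot simply modify the $2$-path reduction in the proof of Theorem \ref{thm:W1hardnessk1}, since that reduction introduces a separate $2$-path per BKP item and hence forces $\min\{|V_1|,|V_2|\}=n$. More generally, no reduction starting from BKP (whose instance size is not bounded by the parameter $B$) can compress $n$ items into a bipartite graph whose smaller side depends only on $B$, so the source problem must be one in which the natural size of a solution is itself bounded by the parameter.

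My main candidate is the Multicolored Clique problem, which is W[1]-hard in the number of colors $k$. The idea is to place on $V_1$ one \emph{color selector} per color class together with one \emph{pair selector} per unordered pair of color classes (so $|V_1|=O(k^2)$), and on $V_2$ one \emph{candidate} vertex per vertex of the source graph. Each color selector is joined to the candidates of its color, and each pair selector to the candidates of the two colors it oversees. Edge weights are assigned with a hierarchical scaling so that (a) the coverage threshold $k_2$ is met only when exactly one candidate is selected per color class, and (b) the matching-weight threshold $k_3$ is met only when, for every pair of color classes, a heavy matching edge from the corresponding pair selector to the two selected candidates exists---which happens exactly when the selected candidates are pairwise adjacent in the source graph. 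The steps are then (i) fix the hierarchical weights and the parameters $k_1,k_2,k_3$, (ii) show that every $k$-clique in the source gives a feasible solution in $G$, and (iii) show the converse by a pigeonhole argument on the coverage and matching thresholds. Since $|V_1|=O(k^2)$, this is an FPT-reduction with respect to $\min\{|V_1|,|V_2|\}$.

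The main obstacle is step (i). Because $\min\{|V_1|,|V_2|\}\leq g(k)$, every matching in $G$ has at most $g(k)$ edges, so the matching weight cannot serve as an additive sum over $\Omega(k^2)$ adjacency witnesses in the style of Theorem \ref{thm:W1hardnessk1}; all the adjacency information must instead be compressed into $O(k^2)$ heavy matching edges whose weights distinguish clique from non-clique selections while remaining compatible with the additive coverage constraint. Designing a weight scheme (possibly together with auxiliary padding gadgets on $V_2$) that achieves this without producing spurious high-weight matchings for non-clique selections is the delicate step, and is where I expect most of the effort to go. Should this prove too brittle, natural fallback source problems are Partitioned Subgraph Isomorphism and Multidimensional Matching, both W[1]-hard and both having a solution encodable on a small side of size bounded by the parameter.
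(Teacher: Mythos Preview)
The statement you are attempting to prove is listed in the paper as an open \emph{conjecture} in the ``Future Work'' section; the paper does not prove it and offers no proof sketch. There is therefore no paper argument to compare your proposal against.

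Regarding the proposal itself: it is a research plan rather than a proof, and the gap you flag in step~(i) is genuine and, in fact, already bites the concrete gadget you describe. If the cover $V'$ consists of $k$ candidate vertices in $V_2$ (one per color), then every covered edge has an endpoint in $V'$, so any matching inside the covered edges has at most $k$ edges. Your condition~(b), however, asks that for \emph{every} unordered pair of colors the corresponding pair selector contribute a heavy matching edge; that is $\binom{k}{2}$ matching edges with distinct $V_1$-endpoints, but only $k$ available $V_2$-endpoints, which is impossible once $k\ge 4$. So the gadget as written cannot certify clique-ness through the matching constraint. There is a second structural constraint you should keep in mind: by Observation~\ref{obs:k1tau(G)}, nontrivial instances must have $k_1<\tau(G)\le\min\{|V_1|,|V_2|\}$, so any reduction with a small side of size $g(k)$ is forced to have $k_1<g(k)$, which limits how many selectors or candidates may be taken into the cover. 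None of the fallback source problems you mention obviously circumvents the matching-size bottleneck, since it stems from the small side itself. In short, your proposal correctly isolates the difficulty but does not resolve it; the conjecture remains open.
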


In Corollary \ref{cor:MaxDegreeFPT}, we have shown that M-EPVCB is less likely to be FPT with respect to $\Delta(G)$. We suspect that
\begin{conjecture}
\label{conj:VminusDeltaFPT} M-EPVCB is W[1]-hard with respect to $|V|-\Delta(G)$.
\end{conjecture}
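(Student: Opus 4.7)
The plan is to prove Conjecture \ref{conj:VminusDeltaFPT} by an FPT-reduction from the restricted BKP of Theorem \ref{thm:BKPrestrict2W1hardnessB} (or a mild strengthening of it) parameterized by the budget $B$. Since in any bipartite graph with parts $X,Y$ we have $|V|-\Delta(G) \geq \min(|X|,|Y|)$, the constructed graph must have one side of size bounded by a function of $B$; it will therefore be a ``hub-heavy'' bipartite graph in which a small second side acts as a collection of universal vertices.

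Concretely, for an instance $(A, pr_1, pr_2, B, P_1, P_2)$ I would build $G$ with $X = \{m_1,\dots,m_n\}$ (one vertex per item) and $Y = \{z, c^{(1)},\dots,c^{(B)}\}$. The edge $m_i z$ receives weight $B(pr_1(a_i)-pr_2(a_i))$ and each edge $m_i c^{(k)}$ receives weight $pr_2(a_i)$. Every vertex in $Y$ has degree $n$, so $\Delta(G)=n$ and $|V|-\Delta(G)=B+1$, which gives the FPT bound. Setting $k_1=B$, $k_2=B\cdot P_1$ and $k_3=P_2$, the canonical solution $V_0=\{m_i : a_i\in S\}$ has coverage exactly $B\cdot pr_1(S)$; and after further scaling the BKP instance (in the style of Theorem \ref{thm:BKPrestrict1W1hardnessB}) to force the stronger restriction $B\cdot(pr_1(a_i)-pr_2(a_i))\leq pr_2(a_i)$, the best matching on the covered edges pairs each selected $m_i$ with a distinct $c^{(k)}$ of weight $pr_2(a_i)$, giving total matching weight exactly $pr_2(S)$. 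Under this calibration both M-EPVCB thresholds align with the two BKP thresholds: $w(E_{V_0})\geq k_2\Leftrightarrow pr_1(S)\geq P_1$ and $\nu_w(E_{V_0})\geq k_3\Leftrightarrow pr_2(S)\geq P_2$.

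The hard part will be ruling out feasible non-canonical solutions, above all the ``hub-only'' set $V_0=\{c^{(1)},\dots,c^{(B)}\}$. This $V_0$ already has size exactly $k_1=B$; it covers every edge of the form $m_i c^{(k)}$ (total weight $B\cdot\sum_i pr_2(a_i)$, potentially $\geq k_2$), and admits a matching whose weight equals the sum of the top $B$ values of $pr_2$ (which is $\geq P_2$ whenever the BKP instance is a ``yes''). As soon as $\sum_i pr_2(a_i)\geq P_1$ this hub-only set becomes M-EPVCB-feasible, which may happen even when no item subset meets the $pr_1$-target, producing a false positive for the reduction as stated. To overcome this, I would strengthen the source BKP by proving W[1]-hardness under an additional restriction such as $\sum_i pr_2(a_i)<P_1$, using the offset/multiplier technique of Theorems \ref{thm:BKPrestrict1W1hardnessB} and \ref{thm:BKPrestrict2W1hardnessB} (adding a large multiple of a constant to $pr_1$ only, while suitably re-balancing $P_1, P_2$ and $pr_2$ so that the earlier restrictions still hold). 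A normalization lemma would then convert any non-canonical feasible $V_0$ (containing $z$ or some $c^{(k)}$'s) into a canonical $V_0\subseteq X$ of the same size by a sequence of local swaps, each swap replacing a hub-vertex by some yet-unused $m_j$ whose $pr_2(a_j)$ dominates the total weight $\sum_i(pr_1(a_i)-pr_2(a_i))$ lost in the exchange. The fundamental tension between small $|V|-\Delta(G)$ (which mandates dominating hub vertices) and faithful encoding (which requires those hubs not to be shortcut covers) is the principal difficulty, and I expect the required further strengthening of the BKP hardness together with the normalization lemma to be the most delicate steps of the proof.
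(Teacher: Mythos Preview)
First, note that in the paper this statement is a \emph{conjecture}: the authors do not prove it. They only show that it would follow from Conjecture~\ref{conj:minXY} (W[1]-hardness with respect to $\min\{|X|,|Y|\}$) via an FPT reduction that attaches a universal hub vertex. Your proposal is therefore not being compared against a proof in the paper; it is an attempt to settle a problem the paper leaves open. Since your construction has $|Y|=B+1$, and hence $\min\{|X|,|Y|\}\leq B+1$, any successful version of your argument would in fact already prove Conjecture~\ref{conj:minXY} (and, by Observation~\ref{obs:Equivalence}, hardness with respect to $\nu(G)$). That is fine, but it means you are attacking the stronger of the two open conjectures.

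The plan has two concrete gaps. The first is that the extra BKP restriction you need to kill the hub-only set, namely $\sum_i pr_2(a_i)<P_1$, is in direct tension with your other restriction $B\cdot(pr_1(a_i)-pr_2(a_i))\leq pr_2(a_i)$. The latter forces $pr_2(a_i)\geq \tfrac{B}{B+1}\,pr_1(a_i)$ for every $i$, so $\sum_i pr_2(a_i)\geq \tfrac{B}{B+1}\sum_i pr_1(a_i)$; on the other hand, for a ``yes''-instance $P_1$ is at most the sum of the $B$ largest $pr_1$-values. Combining these, a ``yes''-instance satisfying both restrictions would need the top $B$ items to carry more than a $\tfrac{B}{B+1}$ fraction of the total $pr_1$-weight; when all $pr_1$-values are equal this already forces $n\leq B$ and the problem is trivial. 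The offset/multiplier tricks of Theorems~\ref{thm:BKPrestrict1W1hardnessB}--\ref{thm:BKPrestrict2W1hardnessB} add the \emph{same} constant to every item and therefore cannot create this kind of concentration, so ``using the offset/multiplier technique'' will not produce the restriction you need.

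The second gap is that the normalization lemma as sketched miscomputes the coverage change of a swap. Removing a hub $c^{(k)}$ from $V_0$ uncovers all edges $m_ic^{(k)}$ with $m_i\notin V_0$, of total weight $\sum_{m_i\notin V_0} pr_2(a_i)$; this is in general far larger than the quantity $\sum_i(pr_1(a_i)-pr_2(a_i))$ you invoke (that quantity, bounded in Theorem~\ref{thm:BKPrestrict2W1hardnessB}, controls the $z$-edges, not the $c^{(k)}$-edges). Adding a single $m_j$ recovers only the one edge $m_jc^{(k)}$ of weight $pr_2(a_j)$ toward that hub (its edges to the remaining hubs were already covered) plus possibly $m_jz$, so the swap can strictly drop the coverage below $k_2$ and the induction fails. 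In short, you have correctly located the obstruction---hub vertices that dominate every item vertex---but the proposed fixes do not remove it, and this obstruction is precisely why the paper states the result only as a conjecture.
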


We can show that Conjecture \ref{conj:minXY} implies Conjecture \ref{conj:VminusDeltaFPT}. It suffices to present an FPT reduction from M-EPVCB considered with respect to $\min\{|X|, |Y|\}$ to M-EPVCB considered with respect to $|V|-\Delta(G)$. Assume that we have $G$ and let $(X, Y)$ be the bipartition of $G$. Assume that $|Y|\leq |X|$. Add a new vertex $z$ to $Y$ and add $k=|V(G)|$ vertices to $X$. Finally join $z$ to all vertices of $X$ both old and new. Let $H$ be the resulting bipartite graph. Observe that its size is polynomial in terms of $G$. Moreover, as we did in the proof of Theorem \ref{thm:bipregular}, we can multiply the weights of edges of $G$ with a big constant $C$ and define the new edges of $H$ to have weight one. As in the proof of this theorem, we define the parameters $k_1, k_2$ and $k_3$ in the same way. One can show that originally we had a ``yes"-instance, if and only if the new instance is a ``yes"-instance. Moreover, observe that this reduction is polynomial time. It remains to bound the parameters. Observe that
\[|V(H)|=|V(G)|+1+k\]
and
\[\Delta(H)=|X|+k.\]
Thus,
\[|V(H)|-\Delta(H)=|Y|+1=\min\{|X|, |Y|\}+1.\]
Thus, this reduction is an FPT-reduction. We finish the discussion with the following

\begin{observation}
\label{obs:Equivalence} The hardness of M-EPVCB with respect to $\min\{|X|, |Y|\}$ is equivalent to that of M-EPVCB with respect to $\nu(G)$.
\end{observation}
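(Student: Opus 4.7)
The plan is to establish the equivalence by exhibiting FPT reductions in both directions between the two parameterizations. One direction is immediate from the inequality $\nu(G) \le \min\{|X|, |Y|\}$, valid in every bipartite graph $G = (X \cup Y, E)$. The identity map is then an FPT reduction from (M-EPVCB, $\min\{|X|, |Y|\}$) to (M-EPVCB, $\nu(G)$): the new parameter never exceeds the old one, so condition (ii) of the FPT-reduction definition holds with $h(t) = t$. Hence W[1]-hardness of M-EPVCB with respect to $\min\{|X|, |Y|\}$ implies W[1]-hardness with respect to $\nu(G)$.

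For the reverse direction, I would construct an FPT reduction from (M-EPVCB, $\nu(G)$) to (M-EPVCB, $\min\{|X|, |Y|\}$). Given an instance $(G, w, k_1, k_2, k_3)$ with $k := \nu(G)$, the aim is to produce an equivalent instance $(G', w', k'_1, k'_2, k'_3)$ in which $\min\{|X'|, |Y'|\}$ is bounded by a computable function $f(k)$. By Observation~\ref{obs:k1tau(G)} we may assume $k_1 < k$, so any feasible solution contains fewer than $k$ vertices. The construction would be a kernelization: using K\"onig's theorem, compute a minimum vertex cover $C$ of $G$ of size $k$ in polynomial time; every vertex outside $C$ has its neighborhood inside $C$, so the non-cover vertices partition into at most $2^k$ ``neighborhood-type'' classes, one for each non-empty $N \subseteq C$. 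Within each class, retain only a bounded number of carefully chosen representative vertices.

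The main obstacle is designing the representative-set step so that equivalence is preserved. Vertices within one class share a common neighborhood $N$ but differ in their edge-weight vectors $(w(v, y))_{y \in N}$, and the kernelization must respect both the coverage requirement $k_2$ and the matching-weight requirement $k_3$ simultaneously. A natural candidate is, within each class, to keep the top $k$ vertices by $w(\cdot, y)$ for every $y \in N$ and, additionally, the top $k$ vertices by total weight $\sum_{y \in N} w(\cdot, y)$. This yields $O(k^2)$ representatives per class and $O(2^k \cdot k^2) = f(k)$ representatives overall, bounding $\min\{|X'|, |Y'|\}$ by $f(k)$. The delicate part is the exchange argument: showing that any feasible solution can be modified to use only the retained representatives, by replacing each non-representative vertex with a suitable unused representative without dropping the covered weight below $k_2$ or the weighted matching below $k_3$. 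Once this bi-objective exchange is in hand, the reduction is polynomial time and the parameter $\min\{|X'|, |Y'|\}$ is bounded by $f(\nu(G))$, yielding the desired FPT reduction.
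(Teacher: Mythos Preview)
Your identification of the trivial direction is correct, and it differs from the paper's. Since $\nu(G)\le\min\{|X|,|Y|\}$, the identity map is already an FPT reduction from $(\text{M-EPVCB},\min\{|X|,|Y|\})$ to $(\text{M-EPVCB},\nu(G))$, so hardness with respect to $\min\{|X|,|Y|\}$ immediately yields hardness with respect to $\nu(G)$. The paper, however, has the two directions swapped: it declares the implication ``hard w.r.t.\ $\nu$ $\Rightarrow$ hard w.r.t.\ $\min$'' to be the trivial one, and then proceeds to prove ``hard w.r.t.\ $\min$ $\Rightarrow$ hard w.r.t.\ $\nu$'' by embedding $G$ into $K_{|X|,|Y|}$ (where $\nu(K_{|X|,|Y|})=\min\{|X|,|Y|\}$). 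That embedding is a valid FPT reduction from $(\text{M-EPVCB},\min)$ to $(\text{M-EPVCB},\nu)$, but this is precisely the direction you already dispatched with the identity map; the paper never actually establishes the converse.

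Your attempt at the genuinely non-trivial direction via a representative-set kernel is therefore more than the paper provides, but the sketch does not close the gap. The exchange you describe is not supported by the representatives you keep. When a non-representative $v$ is swapped for a representative $v'$ with $N(v')=N(v)$, the change in covered weight is $\sum_{y\in N(v)\setminus V'}\bigl(w(v',y)-w(v,y)\bigr)$, which depends on the unknown set $V'\cap N(v)$; being among the top~$k$ for some single coordinate or for the total does not force this partial sum to be nonnegative. Simultaneously you must preserve a matching of weight at least $k_3$, whose optimal assignment within $E(V')$ may shift when $v$ is replaced, so the two criteria interact. To make this route work you would need a richer family of representatives---for instance, for each neighbourhood class and each subset $S\subseteq N$, keep the top $k$ vertices by $\sum_{y\in S}w(\cdot,y)$ (still only $2^{O(k)}\cdot k$ per class, acceptable for an FPT bound)---together with a careful argument handling the matching constraint. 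As written, the bi-objective exchange step is the missing idea.
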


\begin{proof} Since $\nu(G)\leq \min\{|X|, |Y|\}$, one direction is trivial. Assume that the problem is hard with respect to $\min\{|X|, |Y|\}$. Let us show that it is hard with respect to $\nu(G)$. Let us embed $G=(X, Y, E)$ into a complete bipartite graph $K_{|X|, |Y|}$. As previously, we multiply edge-weights of old edges with a big, but polynomially bounded constant (see Theorem \ref{thm:bipregular}). The weights of new edges are one. As before, we have a polynomial time reduction. Thus, the two conditions of FPT-reductions are satisfied. It suffices to show that $\nu(K_{|X|, |Y|})$ is bounded in terms of a function of $\min\{|X|, |Y|\}$. We have
\[\nu(K_{|X|, |Y|})=\min\{|X|, |Y|\}.\]
Thus, the described reduction is an FPT reduction. The proof is complete.
\end{proof}

%\section*{References}

%% References
%%
%% Following citation commands can be used in the body text:
%% Usage of \cite is as follows:
%%   \cite{key}         ==>>  [#]
%%   \cite[chap. 2]{key} ==>> [#, chap. 2]
%%

%% References with bibTeX database:

\bibliographystyle{elsarticle-num}

% \bibliographystyle{elsarticle-harv}
% \bibliographystyle{elsarticle-num-names}
% \bibliographystyle{model1a-num-names}
% \bibliographystyle{model1b-num-names}
% \bibliographystyle{model1c-num-names}
% \bibliographystyle{model1-num-names}
% \bibliographystyle{model2-names}
% \bibliographystyle{model3a-num-names}
% \bibliographystyle{model3-num-names}
% \bibliographystyle{model4-names}
% \bibliographystyle{model5-names}
% \bibliographystyle{model6-num-names}

%\bibliography{sample}

\end{document}